\newtheorem{theorem}{Theorem}
\newtheorem{lemma}[theorem]{Lemma}
\def\>{\rangle}
\def\<{\langle}
\newcommand\floor[1]{\left\lfloor#1\right\rfloor}
\begin{document}

\title{Exact quantum sensing limits for bosonic dephasing channels}

\author{Zixin Huang}
\email{zixin.huang@mq.edu.au}
\affiliation{School of Mathematical and Physical Sciences, Macquarie University, NSW 2109, Australia}
\affiliation{Centre for Quantum Software and Information, Faculty of Engineering and Information Technology, University of Technology Sydney, Sydney, NSW, Australia}

\author{Ludovico Lami}
\email{ludovico.lami@gmail.com}
\affiliation{QuSoft, Science Park 123, 1098 XG Amsterdam, the Netherlands}
\affiliation{Korteweg--de Vries Institute for Mathematics, University of Amsterdam, Science Park 105-107, 1098 XG Amsterdam, the Netherlands}
\affiliation{Institute for Theoretical Physics, University of Amsterdam, Science Park 904, 1098 XH Amsterdam, the Netherlands}

\author{Mark M.~Wilde}
\email{wilde@cornell.edu}
\affiliation{School of Electrical and Computer Engineering, Cornell University, Ithaca, New York 14850, USA}
\date{\today}

\begin{abstract}  
Dephasing is a prominent noise mechanism that afflicts quantum information carriers, and it is one of the main challenges towards realizing useful quantum computation,  communication, and sensing. 
Here we consider discrimination and estimation of bosonic dephasing channels,
when using the most general adaptive strategies allowed by quantum mechanics. 
We reduce these difficult quantum problems to simple classical ones based on the probability densities defining the bosonic dephasing channels.
By doing so, we rigorously establish the optimal performance of various distinguishability and estimation tasks and construct explicit strategies to achieve this performance.
To the best of our knowledge, this is the first example of a non-Gaussian bosonic channel for which there are exact solutions for these tasks.
\end{abstract}

\maketitle

\tableofcontents

\section{Introduction}

An important channel to consider in the context of quantum technologies is the bosonic dephasing channel (BDC). 
{A} single-mode BDC~$\mathcal{D}_p$ is characterized by a probability density function~$p(\phi)${,} where $\phi \in [-\pi,\pi]$ represents the random angle of phase space rotation induced by the channel~\cite{lami2023exact}. Accordingly, the action of $\mathcal{D}_p$ on an input density operator $\rho$ is given by
\begin{align}
\label{eq:bdc}
\mathcal{D}_p(\rho)\coloneqq \int_{-\pi}^{\pi}d\phi\ p(\phi)\, e^{- i\hat{n}\phi}\rho
e^{ i\hat{n}\phi},
\end{align}
where $\hat{n}$ is the photon number operator~\cite{gerry2004introductory}.
Dephasing is a major noise mechanism that afflicts quantum information carriers~\cite{RevModPhys.88.041001}, and it is one of the main challenges towards realizing useful quantum computation, communication, and sensing. In a dephasing noise process, the relative phase information between different photon-number components of a superposed state is lost; for quantum communication, such a process can be understood as arising from, e.g., temperature fluctuations of the environment that stretch or contract the length of a fiber~\cite{W92}. As such, 
{the problem of} understanding the ultimate quantum limits for quantum information tasks using such channels has received {considerable} attention recently~\cite{JC10,PhysRevA.102.042413,Z21,Fanizza2021squeezingenhanced,rexiti2022discrimination,AMM23,lami2023exact,RevModPhys.87.307,google2021exponential}.

Two important tasks for characterizing the capabilities of BDCs are channel discrimination (quantum hypothesis testing) and parameter estimation (quantum metrology).
For hypothesis testing, the task is to distinguish between models describing different physical processes. The most basic setting involves a binary decision, for which the goal is to distinguish between two hypotheses, commonly called the null hypothesis and the alternative hypothesis. Quantum state discrimination is crucial in several applications (e.g., quantum communication~\cite{sidhu2023linear}, astronomical sensing~\cite{PhysRevLett.127.130502,PhysRevA.107.022409}, and spectroscopy~\cite{PhysRevLett.125.180502}), and it has been extensively studied~\cite{bae2015quantum,chefles200412}. 
Quantum channel discrimination, a generalization of state discrimination, has been studied less; however, there is an increasing body of literature on this topic~\cite{CDP08,Duan09,PW09,Hayashi09,Harrow10,MPW10,Cooney2016,PhysRevLett.118.100502,takeoka2016optimal,Puzzuoli2017,wilde2020amortized,WW19,FFRS20,KW21,FF21,bergh2022parallelization,SHW22,bergh2023infinite}. In channel discrimination, the unknown channel is called $n$ times, and the goal is to perform a measurement on the final state to determine which channel was called. This setting is more complex than state discrimination because one can optimize over various strategies in order to make the error probabilities as low as possible. One can either employ an adaptive strategy or a non-adaptive parallel strategy. It is known that adaptive strategies possess advantages over non-adaptive ones in the non-asymptotic regime~\cite{Harrow10,KW21}; however, these advantages come with limits~\cite{bergh2022parallelization,bergh2023infinite}. In the setting of symmetric error, adaptive strategies can also outperform parallel ones in the asymptotic regime~\cite{SHW22}, but they do not in the asymptotic setting of asymmetric error~\cite{WW19,wilde2020amortized,FFRS20,bergh2023infinite}.

Quantum metrology
deals with the optimal estimation of parameters encoded in quantum states and quantum channels, and the typical goal is to minimize the variance of the parameter of interest. Quantum strategies for estimation involve nonclassical effects like  entanglement to achieve precision limits beyond those that are allowed by classical physics. 
The ultimate quantum limit for quantum metrology using adaptive strategies is in general exceedingly difficult to characterize, as the nested optimizations over each step of an adaptive protocol often lead to a mathematically intractable problem.
When estimating parameters encoded into quantum channels, one can again consider parallel and adaptive strategies \cite{ZJ21}, with there being differences in their performance  \cite{LHYY23}.
Previous works have considered bounds for  parameter estimation of unitary channels~\cite{escher2011general,demkowicz2012elusive,PhysRevLett.113.250801}, for teleportation-covariant and Gaussian channels~\cite{PhysRevLett.118.100502,takeoka2016optimal}, and for general channels~\cite{katariya2021geometric}.

\begin{figure}[t]
\includegraphics[width=\linewidth]{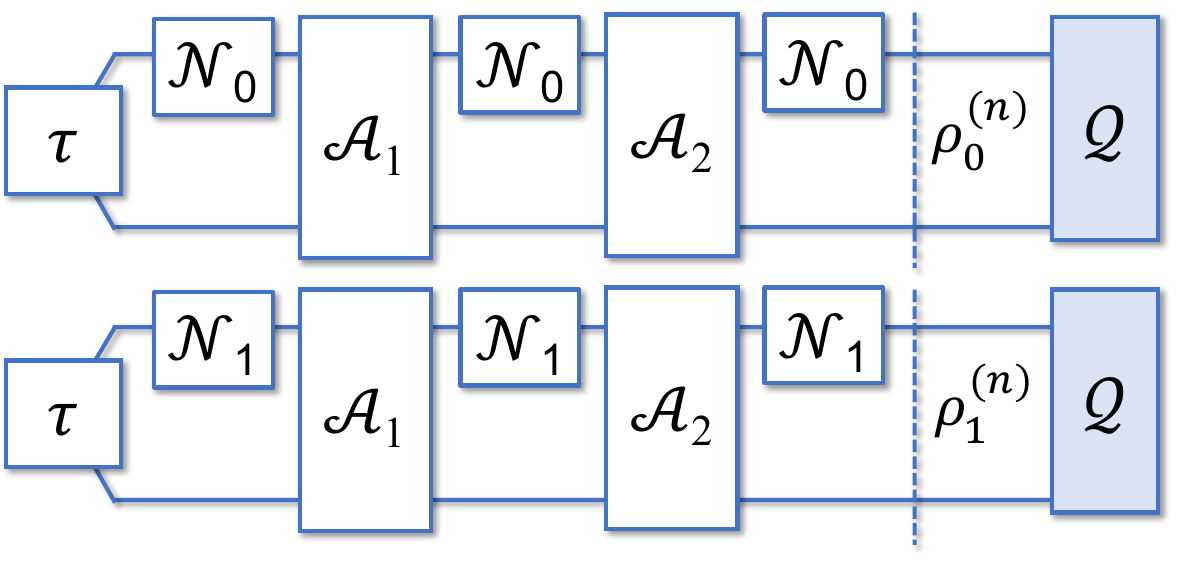}
\caption{\label{fig:scheme} A general, adaptive protocol for channel discrimination and parameter estimation, when either $\mathcal{N}_0$ or $\mathcal{N}_1$ is called three times. The initial input state is $\tau$, the adaptive operations are $\mathcal{A}_1$ and $\mathcal{A}_2$, and the final measurement is~$\mathcal{Q}$. The final states are denoted by $\rho_0^{(n)}$ and $\rho_1^{(n)}$, and $n=3$ in this case.
}
\end{figure}

In this paper, we consider several tasks associated with  BDCs:
\begin{enumerate}
\item channel discrimination of two BDCs in the symmetric error setting of hypothesis testing,
\item channel discrimination of two BDCs in the asymmetric error setting of hypothesis testing,
\item noise parameter estimation of BDCs.
\end{enumerate} 
We consider several other discrimination tasks  and discuss the multimode generalization of our results in some of the appendices. By making a connection to the physics (``environmental state'') that gives rise to the channel processes, for 1., 2., and 3., we quantify the largest distinguishability or estimability that can be realized between BDCs using the most general strategy achievable by adaptive protocols.
We do so by proving optimality bounds and showing their attainability. In the latter case, we provide a fully rigorous proof of the convergence of the guessed probability density to the original one.
To the best of our knowledge, our results here constitute the first example of a class of non-Gaussian bosonic channels for which we have exact solutions for these tasks.

The structure of our paper is as follows. In Sections~\ref{sec:QCD}--\ref{sec:QPE}, we introduce the theoretical frameworks  of quantum channel discrimination and estimation. We present our main results in Section~\ref{sec:results};
the optimality parts of our proofs are given in Section~\ref{sec:optimality}, and the attainability parts 
in Section~\ref{sec:attainability}. We show that in the energy-unconstrained limit 
the corresponding figures of merit match, thus leading to exact solutions for channel discrimination and estimation tasks involving BDCs. In Section~\ref{sec:deph-plus-loss}, we generalize our findings when photon loss is present in addition to dephasing, showing that the same fundamental limits apply. We finally conclude in Section~\ref{sec:conclusion} with a summary and some directions for future research. In Appendix~\ref{app:other-scens}, we discuss a variety of other scenarios to which our results apply, including the strong converse exponent, the error exponent, multiple channel discrimination, and antidistinguishability. In Appendix~\ref{app:multimode}, we briefly 
touch upon how our results generalize to multimode BDCs, and in Appendix~\ref{app:multiparam}, how our findings generalize to multiparameter estimation.

\section{Quantum channel discrimination and estimation}

\subsection{Quantum channel discrimination}

\label{sec:QCD}

The goal of quantum channel discrimination is to distinguish one quantum channel $\mathcal{N}_0$ from another channel~$\mathcal{N}_1$ by calling the unknown channel $n$ times. A strategy for doing so is abbreviated by~$\mathcal{A}$, which denotes the initial state $\tau$ prepared, the $n-1$ adaptive operations between every call to the unknown channel, and the final measurement, denoted by $\mathcal{Q} \coloneqq (Q_0, Q_1)$. The conditions $ Q_0, Q_1 \geq 0$ and $Q_0 + Q_1 = I$ hold, so that $\mathcal{Q}$ is a positive operator-valued measure (POVM). The measurement outcome $Q_0$ corresponds to deciding $\mathcal{N}_0$, and the outcome $Q_1$ corresponds to deciding $\mathcal{N}_1$. An example of such a quantum channel discrimination protocol with $n=3$ is depicted in Figure~\ref{fig:scheme}. The type~I error probability is the probability of deciding $\mathcal{N}_1$ when the actual channel is $\mathcal{N}_0$, and the type~II error probability is the probability of deciding $\mathcal{N}_0$ when the actual channel is $\mathcal{N}_1$. We denote these error probabilities, respectively, as follows:
\begin{equation}
    \alpha_n(\mathcal{A})  \coloneqq \operatorname{Tr}\!\left[Q_1\rho_0^{(n)}\right]  , \qquad
     \beta_n(\mathcal{A})  \coloneqq \operatorname{Tr}\!\left[Q_0\rho_1^{(n)}\right] \label{eq:ch-disc-err-probs},
\end{equation}
where $\rho_0^{(n)}$ denotes the final state of the protocol in case the unknown channel is $\mathcal{N}_0$ and $\rho_1^{(n)}$ denotes the final state of the protocol in case the unknown channel is $\mathcal{N}_1$. Additionally, in the notations $\alpha_n(\mathcal{A})$ and $\beta_n(\mathcal{A})$, we have left the dependence on the channels $\mathcal{N}_0$ and $\mathcal{N}_1$ implicit, but we explicitly denote the dependence on the number~$n$ of calls to the unknown channel and the strategy~$\mathcal{A}$.   
In radar applications, $\alpha_n$ is known as the false alarm probability, and $\beta_n$ is called the missed detection probability~\cite{van2004detection}. For more details of quantum channel discrimination, please refer to~\cite[Section~III-A]{wilde2020amortized}.

The difficulty of quantum channel discrimination arises because most operational quantities of interest involve optimizations of these error probabilities over every possible adaptive strategy $\mathcal{A}$. For the finite-dimensional case, these optimizations can be phrased as semi-definite programs~\cite{KW21}, but their computational complexity grows exponentially in $n$. For the infinite-dimensional case, the resulting optimization problem is an infinite-dimensional semi-definite program and, in general, is even more difficult to solve.

Classical hypothesis testing, on the other hand, is much simpler to handle computationally.
In this scenario, a sample~$\phi$ is selected from a probability density function $p$ or $q$, and the goal is to identify whether $\phi$ came from $p$ or $q$. This setting can be generalized to the multiple sample setting, in which $n$ samples, abbreviated as $\phi^n\equiv (\phi_1, \ldots, \phi_n)$, are selected from $p^{\otimes n}$ or $q^{\otimes n}$. In order to decide whether  $p^{\otimes n}$ or $q^{\otimes n}$ is the underlying density, the most general procedure one can perform on $\phi^n$ is a randomized test $t$~\cite[Section~2.1]{tan2014}, which is the classical version of a POVM (i.e., $t(\phi^n) \in [0,1]$ is the probability of deciding $p^{\otimes n}$ for all $\phi^n$). The error probabilities of classical hypothesis testing are then given by
\begin{align}
    \alpha_n(t) & \coloneqq  1 - \int d\phi^n\,  t(\phi^n) \, p^{\otimes n}(\phi^n), 
    \label{eq:cl-disc-err-prob-a} \\
    \beta_n(t) & \coloneqq   \int d\phi^n\,  t(\phi^n) \, q^{\otimes n}(\phi^n),
    \label{eq:cl-disc-err-prob-b} 
\end{align}
where, in the notation for $\alpha_n(t)$ and $\beta_n(t)$, we have left the dependence on the probability densities $p$ and $q$ implicit.

\subsection{Quantum channel estimation}

\label{sec:QPE}

The goal of quantum channel estimation is to estimate a parameter $\theta$ encoded in a  family $(\mathcal{N}_\theta)_{\theta \in \Theta} $ of quantum channels. 
The main difference with quantum channel discrimination is that $\theta$ is selected from a continuous set $\Theta$, and thus we can only obtain an estimate of it, rather than identify it exactly. However, one can think of channel discrimination as being a special case of channel estimation in which the set $\Theta$ is a finite set consisting of just two elements (i.e., $\Theta = \{0,1\}$ for channel discrimination).

The most general protocol for channel estimation is adaptive, similar to that discussed for channel discrimination. As such, we use the same notation $\mathcal{A}$ to refer to an adaptive strategy for channel estimation. However, the main difference is that the final POVM of a channel estimation protocol outputs an estimate $\hat{\theta}$ of the unknown parameter $\theta$, and thus, the final POVM is of the form $(Q_{\hat{\theta}})_{\hat{\theta}\in \Theta}$, where $Q_{\hat{\theta}} \geq 0$ for all $\hat{\theta}$ and $\int d\hat{\theta} \, Q_{\hat{\theta}}=I$. Denoting the final state of an $n$-shot protocol by~$\rho^{(n)}_\theta$ whenever the underlying channel is $\mathcal{N}_\theta$, the conditional probability density of observing $\hat{\theta}$ is
$ \operatorname{Tr}\!\left[Q_{\hat{\theta}} \rho^{(n)}_\theta\right]$.

To quantify the performance of a channel estimation protocol, we employ a cost function $c(\hat{\theta},\theta)$ that measures the deviation of the estimate $\hat{\theta}$ from the true value~$\theta$. 
The basic properties for such a cost function are as follows \cite[Section~2.1]{korostelev2011mathematical}:
\begin{enumerate}
\item $c(\hat{\theta},\theta)=0$ if $\hat{\theta} = \theta$,
\item $c(\hat{\theta},\theta)=c(\theta,\hat{\theta})$,
\item $c(\hat{\theta}',\theta) \geq c(\hat{\theta},\theta)$ for $\hat{\theta}' \geq \hat{\theta} \geq \theta$,
\item $c(\hat{\theta},\theta)$ is not identically equal to zero,
\item for some constants $k,a>0$, the following bound holds: $c(\hat{\theta},\theta) \leq k(1+|\hat{\theta} - \theta|^a)$ for all $\hat{\theta},\theta$.
\end{enumerate}
Beyond basic properties expected for such a function, we require it to be continuous. 
Common choices when $\Theta = \mathbb{R}$ include the absolute deviation $c(\hat{\theta},\theta) = |\hat{\theta} -\theta|$ and quadratic cost $c(\hat{\theta},\theta) = (\hat{\theta} -\theta)^2$. 
We then define the risk of an $n$-round adaptive strategy~$\mathcal{A}$ to be the expected cost:
\begin{equation}
    r_n(\theta, \mathcal{A})\coloneqq \int d\hat{\theta}\, \operatorname{Tr}\!\left[Q_{\hat{\theta}} \rho^{(n)}_\theta\right] \, c(\hat{\theta},\theta).
\end{equation}
As with channel discrimination, in the notation $r_n(\theta, \mathcal{A})$, we leave the dependence on $\mathcal{N}_\theta$ implicit.
If we choose $c(\hat{\theta},\theta) = (\hat{\theta} -\theta)^2$, the expected cost is the mean squared error or variance, which is the standard error metric for parameter estimation.
The difficulty in channel estimation also lies in the complexity of adaptive strategies that can be considered.

Classical parameter estimation is again simpler. In this scenario, there is a parameterized family $(p_\theta)_{\theta \in \Theta}$ of probability densities. Given access to $n$ independent samples selected from $p_\theta^{\otimes n}$, labeled by $\phi^n$, an  estimate~$\hat{\theta}$ is output according to the conditional probability density~$t(\hat{\theta}|\phi^n)$. Then the conditional probability density for outputting~$\hat{\theta}$  is
\begin{equation}
    s(\hat{\theta}|\theta) \coloneqq \int d \phi^n\,  t(\hat{\theta}|\phi^n)\,  p_\theta^{\otimes n}(\phi^n),
\end{equation}
and thus the risk of a classical strategy $t$ for estimating~$\theta$ is
\begin{equation}
    r_n (\theta, t) \coloneqq \int d\hat{\theta} \, s(\hat{\theta}|\theta) \, c (\hat{\theta}, \theta).
\end{equation}
Here, in the notation $r_n(\theta, t)$, we again leave the dependence  on the underlying probability density $p_\theta$ implicit.

\section{Main results}
\label{sec:results}

One of the main insights of our paper is that, when restricting the underlying channels to BDCs $\mathcal{D}_p$ and $\mathcal{D}_q$ of the form in~\eqref{eq:bdc} and with respective probability densities~$p$ and $q$, various optimized functions of the error probabilities $\alpha_n(\mathcal{A})$ and $\beta_n(\mathcal{A})$, as defined in~\eqref{eq:ch-disc-err-probs}, are equal to the same optimized functions of the classical error probabilities $\alpha_n(t)$ and $\beta_n(t)$, as defined in~\eqref{eq:cl-disc-err-prob-a}--\eqref{eq:cl-disc-err-prob-b}. Our first main result is that the following equality holds for all~$\lambda \in (0,1)$:
\begin{equation}
     \inf_{\mathcal{A}} \big\{ \lambda \alpha_n(\mathcal{A})+ (1-\lambda)\beta_n(\mathcal{A}) \big\}
     = \inf_{t} \big\{\lambda\alpha_n(t) + (1-\lambda) \beta_n(t) \big\}
     \label{eq:symm-err-cl} ,
\end{equation}
where the optimization on the left is over all adaptive strategies and that on the right is over all randomized tests (we adopt this same abbreviated notation  in all related statements that follow). Our second main result is that the following equality holds for all $\varepsilon \in (0,1)$:
\begin{equation}
   \inf_{\mathcal{A}} \left\{ \beta_n(\mathcal{A}): \alpha_n (\mathcal{A}) \leq \varepsilon \right \} =  \inf_{t } \left\{ \beta_n(t): \alpha_n (t) \leq \varepsilon \right \}.
    \label{eq:asymm-err-prob-cl}
\end{equation}
In Appendix~\ref{app:hypo-test-reg-equiv}, we prove that these claims are actually a consequence of the fact that the sets of achievable pairs, $\left\{\left(\alpha_n (\mathcal{A}), \beta_n (\mathcal{A})\right)\right\}_{\mathcal{A}}$ and $\left\{\left(\alpha_n (t), \beta_n (t)\right)\right\}_{t}$, coincide.  

A related insight of our paper concerns parameterized families $(\mathcal{D}_{p_\theta})_{\theta \in \Theta}$ of BDCs, for which the probability density $p_\theta$ underlying $\mathcal{D}_{p_\theta}$ is parameterized. In this case, our third main result is that the following equality holds for an arbitrary risk function for which the underlying cost function is continuous:
\begin{equation}
   \inf_{\mathcal{A}} r_n(\theta, \mathcal{A}) = \inf_t r_n(\theta, t),
   \label{eq:ch-est-main-result}
\end{equation}
where, similar to~\eqref{eq:symm-err-cl},  the optimization on the left is over all adaptive strategies for channel estimation of $\mathcal{D}_{p_\theta}$ and the optimization on the right is over all classical estimation strategies for $p_\theta$.

The equalities in~\eqref{eq:symm-err-cl},~\eqref{eq:asymm-err-prob-cl}, and~\eqref{eq:ch-est-main-result} are some of the main results of our paper, and 
as we will see, they give a complete understanding of the fundamental limits of channel discrimination and parameter estimation for BDCs. 
The equalities in~\eqref{eq:symm-err-cl}--\eqref{eq:asymm-err-prob-cl} thus represent a significant reduction in the difficulty of channel discrimination for BDCs, i.e., reducing it to a classical hypothesis testing problem, for which there is a wealth of knowledge that we can apply. A similar statement applies for~\eqref{eq:ch-est-main-result} and channel estimation of BDCs. In the subsections that follow, we discuss various scenarios of interest in more detail.

\subsection{Symmetric hypothesis testing}

In the symmetric setting of channel discrimination, the goal is to find an optimal strategy that attains the minimum average error probability, defined as
\begin{equation}
\inf_{\mathcal{A}} \big\{ \lambda \alpha_n(\mathcal{A})+ (1-\lambda)\beta_n(\mathcal{A}) \big\},
\label{eq:symm-err}
\end{equation}
where $\lambda \in (0,1)$ is the prior probability that channel $\mathcal{N}_0$ is selected and $\alpha_n(\mathcal{A})$ and $\beta_n(\mathcal{A})$ are defined in~\eqref{eq:ch-disc-err-probs}. Our result here establishes, for  BDCs $\mathcal{D}_p$ and $\mathcal{D}_q$, that the equality in~\eqref{eq:symm-err-cl} holds. As a consequence of a well known result from statistics (see, e.g.,~\cite[Lemma~1.4]{canonne2022topics}), we have the following explicit form for the error probability:
\begin{equation} \begin{aligned}
&\inf_{t } \big\{ \lambda\alpha_n(t) + (1-\lambda) \beta_n(t)\big\} \\
&\qquad = \frac{1}{2}\left(1 - \left\Vert \lambda p^{\otimes n} - (1\!-\!\lambda)q^{\otimes n}\right \Vert_1 \right),
\end{aligned} \end{equation}
where $\left\Vert f \right \Vert_1 \coloneqq \int d\phi^n \, |f(\phi^n)| $ is the 
$\ell_1$-norm of a function~$f$.

The non-asymptotic error exponent for channel discrimination in the symmetric setting is defined as
\begin{equation}
C_n(\lambda,\mathcal{N}_0,\mathcal{N}_1)\coloneqq  
\sup_{\mathcal{A}} \left\{-\frac{1}{n}\ln \left( \lambda \alpha_n(\mathcal{A})+ (1-\lambda)\beta_n(\mathcal{A})\right) \right \}.
\label{eq:non-asym-sym-exp}
\end{equation}
By appealing to the Chernoff theorem from probability theory~\cite{chernoff1952measure} and defining the Chernoff divergence of two probability densities $p$ and $q$ as
\begin{equation}
    C(p\Vert q)  \coloneqq -\ln \inf_{s \in [0,1]} \int_{-\pi}^{\pi} d\phi \, p(\phi)^s q(\phi)^{1-s},
    \label{eq:Chernoff-div}
\end{equation}
our result implies the following simple expression for the asymptotic error exponent of BDCs $\mathcal{D}_p$ and $\mathcal{D}_q$:
\begin{equation}
    \lim_{n \to \infty} C_n(\lambda,\mathcal{D}_p,\mathcal{D}_q)  = C(p\Vert q) .
\end{equation}
This is because
\begin{align}
    & \lim_{n \to \infty} C_n(\lambda,\mathcal{D}_p,\mathcal{D}_q)\notag \\
    &  = \lim_{n \to \infty} \sup_{t } \left\{-\frac{1}{n} \ln \left( \lambda\alpha_n(t) + (1-\lambda) \beta_n(t)\right) \right\}\\
    & = C(p\Vert q),
\end{align}
where the first equality follows  from \eqref{eq:symm-err-cl} and~\eqref{eq:non-asym-sym-exp} and the second from Chernoff's theorem.

\subsection{Asymmetric hypothesis testing}

\label{ssec:asym}

In the asymmetric setting of channel discrimination, the goal is to minimize the type~II error probability subject to a constraint on the type~I error probability. One example is in radar applications or quantum illumination~\cite{PhysRevA.71.062340,lloyd2008enhanced,PhysRevLett.101.253601}, where one is willing to tolerate a certain rate of false alarms but then desires to minimize the chances of missed detections~\cite{WTLB17}. Indeed,  the asymmetric error setting is the right setting to focus on for such applications and leads to a curve known as the receiver operating characteristic \cite{van2004detection}.

More formally, for $\varepsilon \in (0,1)$,
the goal is to optimize over every adaptive strategy $\mathcal{A}$ in order to minimize the type~II error probability $\beta_n$:
\begin{equation}
    \inf_{\mathcal{A}} \left\{ \beta_n(\mathcal{A}): \alpha_n (\mathcal{A}) \leq \varepsilon \right \},
    \label{eq:asymm-err-prob}
\end{equation}
where $\alpha_n(\mathcal{A})$ and $\beta_n(\mathcal{A})$ are defined in~\eqref{eq:ch-disc-err-probs}.
Our result here is that, for BDCs $\mathcal{D}_p$ and $\mathcal{D}_q$, the equality in~\eqref{eq:asymm-err-prob-cl} holds,
where $\alpha_n(t)$ and $\beta_n(t)$ are defined in~\eqref{eq:cl-disc-err-prob-a}--\eqref{eq:cl-disc-err-prob-b}.

The non-asymptotic error exponent for channel discrimination in the asymmetric setting is defined as
\begin{equation}
Z_n(\varepsilon,\mathcal{N}_0,\mathcal{N}_1) \coloneqq  
\sup_{\mathcal{A}} \left\{-\frac{1}{n}\ln \beta_n(\mathcal{A}): \alpha_n (\mathcal{A}) \leq \varepsilon \right\}.
\end{equation}
By appealing to Strassen's theorem~\cite[Theorem~3.1]{strassen1962asymptotische}, itself a refinement of the classical Stein's lemma~\cite{stein_unpublished,chernoff_1956}, our result implies the following expansion of~${Z_n}(\varepsilon,\mathcal{D}_p,\mathcal{D}_q)$:
\begin{multline}
{Z_n}(\varepsilon,\mathcal{D}_p,\mathcal{D}_q) = \\ D(p\Vert q) + \sqrt{\frac{V(p\Vert q)}{n}}\,\Phi^{-1}(\varepsilon) 
    + \frac{\ln n}{2n} + O\!\left(\frac{1}{n}\right). \label{eq:2nd-order-expansion}
\end{multline}
(See also~\cite[Proposition~2.3]{tan2014}.)
In the above, the relative entropy, the relative entropy variance, and the inverse cumulative distribution function of a standard normal random variable are respectively defined as
\begin{align}
    D(p\Vert q) & \coloneqq \int_{-\pi}^{\pi} d\phi\,  p(\phi) \ln \!\left(\frac{p(\phi)}{q(\phi)}\right) , \\
     V(p\Vert q) & \coloneqq \int_{-\pi}^{\pi} d\phi\,  p(\phi)
     \left[ \ln \!\left(\frac{p(\phi)}{q(\phi)}\right) - D(p\Vert q)\right]^2 ,\\
     \Phi^{-1}(\varepsilon) & \coloneqq \sup\!\left\{
a\in\mathbb{R}:\Phi(a)\leq\varepsilon\right\},
\end{align}
where
\begin{equation}
    \Phi(a)\coloneqq \frac{1}{\sqrt{2\pi}}\int_{-\infty}^{a}dx\,\exp\!\left(
-x^{2}/2\right).
\end{equation}
The equality in~\eqref{eq:2nd-order-expansion} follows from the equality in~\eqref{eq:asymm-err-prob-cl} and from the following expansion, a direct application of Strassen's theorem:
\begin{multline}
    \sup_{t } \left\{-\frac{1}{n}\ln \beta_n(t): \alpha_n (t) \leq \varepsilon \right\} = \\
    D(p\Vert q) + \sqrt{\frac{V(p\Vert q)}{n}}\, \Phi^{-1}(\varepsilon) 
    + \frac{\ln n}{2n} + O\!\left(\frac{1}{n}\right).
\end{multline}
Note that $\Phi^{-1}(\varepsilon) < 0 $ for $\varepsilon < 1/2$ and $\Phi^{-1}(\varepsilon) > 0 $ for $\varepsilon > 1/2$. As such, by inspecting~\eqref{eq:2nd-order-expansion}, we see that, as a function of the number~$n$ of channel uses, 
the error exponent ${Z_n}(\varepsilon,\mathcal{D}_p,\mathcal{D}_q)$ approaches the optimal asymptotic value $D(p\Vert q)$ from below for $\varepsilon < 1/2$ and from above for $\varepsilon > 1/2$, at a speed determined by the relative entropy variance $V(p\Vert q)$.
The regime of practical interest occurs when $\varepsilon <1/2$, for which the error exponent ${Z_n}(\varepsilon,\mathcal{D}_p,\mathcal{D}_q)$ thus approaches $D(p\Vert q)$ from below.

\subsection{Quantum metrology}

In the setting of quantum metrology (an umbrella term containing quantum channel estimation), the goal is to minimize the risk over all possible adaptive strategies. Our main result for channel estimation, which applies to a continuous family $(\mathcal{D}_{p_\theta})_\theta$ of BDCs, states that the equality in~\eqref{eq:ch-est-main-result} holds. In the case that the underlying cost function is the quadratic cost function and restricting the optimization to be over adaptive strategies that lead to an unbiased estimator, a consequence of our finding and the classical Cram\'{e}r--Rao bound is the following inequality:
\begin{equation}
    \inf_{\mathcal{A}} r_n(\theta, \mathcal{A}) \geq \frac{1}{n F(\theta)},
    \label{eq:fisher-info-ineq}
\end{equation}
where the Fisher information of the parameterized family~$(p_\theta)_\theta$ is defined as follows:
\begin{equation}
\label{eq:FI}
    F(\theta) \coloneqq \int_{-\pi}^{\pi} d\phi ~p_\theta(\phi)  \left(
    \frac{d}{d\theta}\, \ln p_\theta(\phi)\right)^2.
\end{equation}
The inequality in~\eqref{eq:fisher-info-ineq} follows directly from~\eqref{eq:ch-est-main-result} and the well known Cram\'er--Rao bound $ \inf_{t} r_n(\theta, t) \geq [n F(\theta)]^{-1}$~\cite[Corollary~1.9]{korostelev2011mathematical}.

\section{Examples of bosonic dephasing channels}

To gain some intuition about our findings, let us consider 
the specific examples of BDCs studied in~\cite{lami2023exact}. We plot the Chernoff divergence, the relative entropy, and the Fisher information of certain instances of these channels, which are the main quantities of interest in the asymptotic settings of symmetric channel discrimination, asymmetric channel discrimination, and channel estimation, respectively.

As stressed in~\cite{lami2023exact} and previous works~\cite{JC10,PhysRevA.102.042413}, perhaps the most important class of BDCs are those resulting from setting the probability density $p(\phi)$ in~\eqref{eq:bdc} to be the wrapped normal distribution:
\begin{equation}
    p_{\gamma}(\phi) \coloneqq \frac{1}{\sqrt{2\pi\gamma}}\sum_{k=-\infty}^{+\infty} e^{-\frac{1}{2\gamma} (\phi+2\pi k)^2} ,
\end{equation}
where $\gamma > 0 $ is the variance. This probability density  results from picking $\phi$ according to a mean-zero normal distribution of variance $\gamma$, but then outputting a value in $[-\pi, \pi]$ modulo $2\pi$. Physically, as considered in~\cite{JC10,PhysRevA.102.042413}, it corresponds to interacting the channel input mode with an environmental mode prepared in the vacuum state, according to the Hamiltonian $\hat{n} \otimes (\hat{e} + \hat{e}^\dag)$, where $\hat{e}$ is the annihilation operator for the environmental mode. It can alternatively be realized in terms of Lindbladian evolution for a  time $\gamma$ according to the single Lindblad operator $\hat{n}$.

Another probability density of interest for the BDC is based on the von Mises distribution:
\begin{align}
p_\lambda(\phi) \coloneqq  \frac{e^{\cos(\phi)/\lambda}}{2\pi I_0(1/\lambda)},
\end{align}
where $I_n$ denotes a modified Bessel function of the first kind. The parameter $\lambda$ determines the spread of the distribution, analogous to $\gamma$ for the wrapped normal. For $\lambda  \to
\infty$, it converges to the uniform density, while it becomes highly peaked at zero in the limit $\lambda \to 0$. 

The final circular distribution that we consider is the wrapped Cauchy distribution, given by
\begin{align}
p_{\kappa}(\phi) \coloneqq \frac{1}{2\pi} \frac{\sinh \sqrt{\kappa}}{\cosh\sqrt \kappa - \cos (\phi)}.
\end{align}
The parameter $\kappa > 0$ again determines the spread of the distribution. 

Figure~\ref{fig:chernoff} plots the Chernoff divergence of a pair of BDCs for each kind of distribution, with one spread parameter fixed at a value of $\gamma = \lambda = \kappa = 1$ and the other spread parameter varied. Figure~\ref{fig:relative_entropies} does the same for the relative entropy. Figure~\ref{fig:FI} plots the Fisher information of the underlying channel parameter $\gamma$, $\lambda$, or $\kappa$. We find similar qualitative behavior for all three kinds of probability densities. 

\begin{figure}
\includegraphics[width=\linewidth]{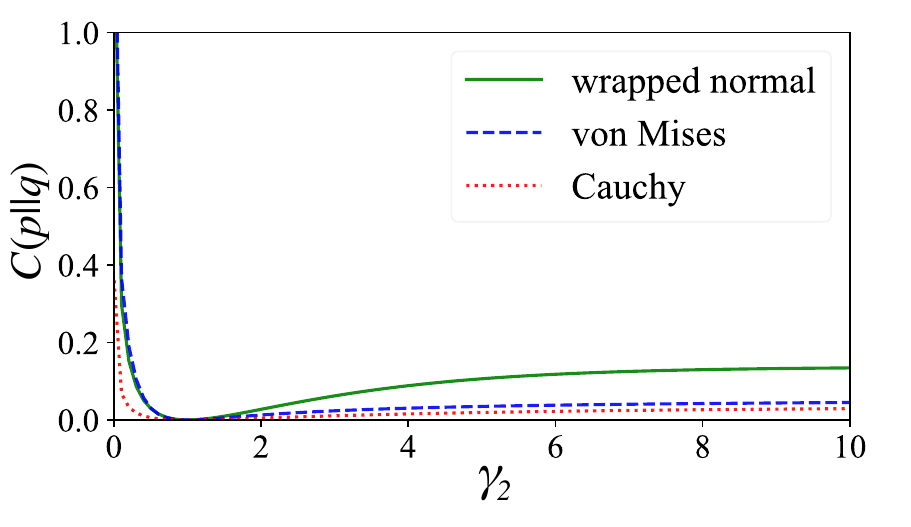}
\caption{\label{fig:chernoff}
The Chernoff exponent of different BDCs as a function of $\gamma_2 = \lambda_2 = \kappa_2$, for $\gamma_1 = \lambda_1 = \kappa_1 = 1$.}
\end{figure}

\begin{figure}
\includegraphics[width=\linewidth]{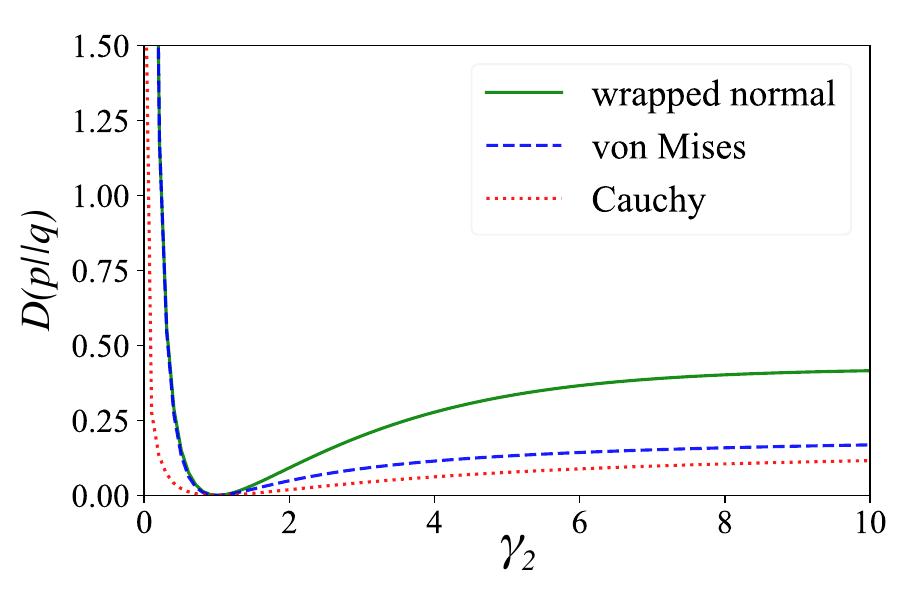}
\caption{\label{fig:relative_entropies}
The relative entropy of different BDCs as a function of $\gamma_2 = \lambda_2 = \kappa_2$, for $\gamma_1 = \lambda_1 = \kappa_1 = 1$.}
\end{figure}

\begin{figure}
\includegraphics[width=\linewidth]{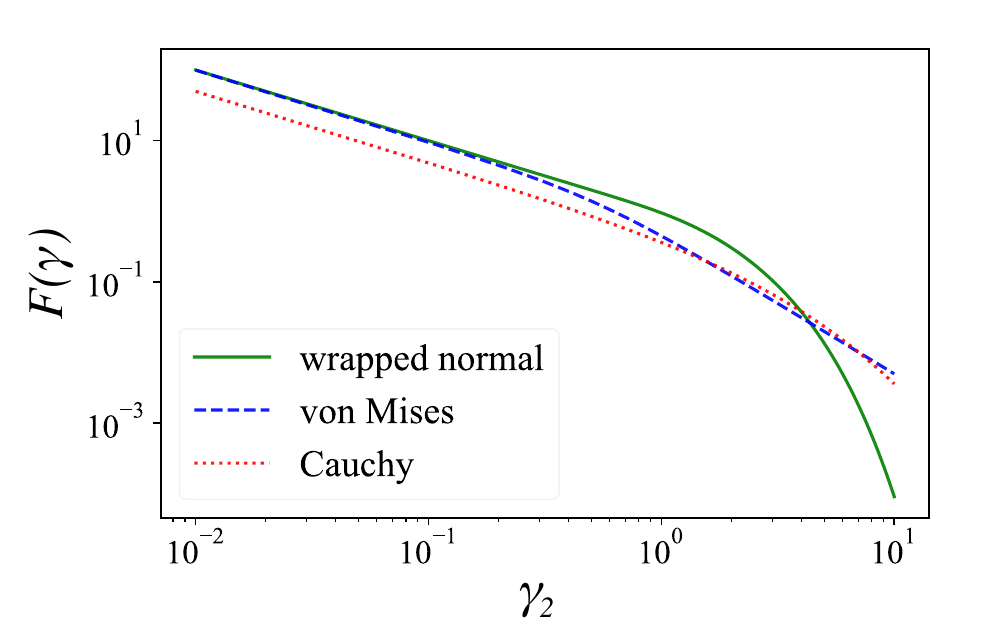}
\caption{\label{fig:FI}
The Fisher information  of  different BDCs as a function of $\gamma = \kappa = \lambda$ shown as a log-log plot.  The Fisher information quickly drops to zero as $\gamma$ increases.}
\end{figure}

\section{Optimality}

\label{sec:optimality}

In this section, we prove one side of the equalities in~\eqref{eq:symm-err-cl},  \eqref{eq:asymm-err-prob-cl}, and~\eqref{eq:ch-est-main-result}  (called the ``optimality'' part), based on a simple observation about all BDCs of the form in~\eqref{eq:bdc}. Namely, they can be simulated by the method discussed in~\cite[Section~3.3]{matsumoto2010metric}, that is, by means of adjoining a parameterized environment state followed by the action of an unparameterized channel.  
After~\cite{matsumoto2010metric} appeared, similar observations were made for other channels in several subsequent works, including~\cite{PhysRevLett.113.250801} and~\cite{PhysRevLett.118.100502,takeoka2016optimal,wilde2020amortized}, and here our contribution is to make a similar observation for BDCs. Namely, all BDCs can be simulated by composing the following two processes:
\begin{enumerate}
    \item A classical background phase~$\phi$ is chosen randomly according to the probability density $p(\phi)$ in~\eqref{eq:bdc}.
    
    \item The input system has the phase operator $e^{- i\hat{n}\phi}$ applied to it, based on the value of $\phi$ chosen, and the value~$\phi$ is subsequently discarded.
\end{enumerate}

\begin{figure}[t]
\includegraphics[width=\linewidth]{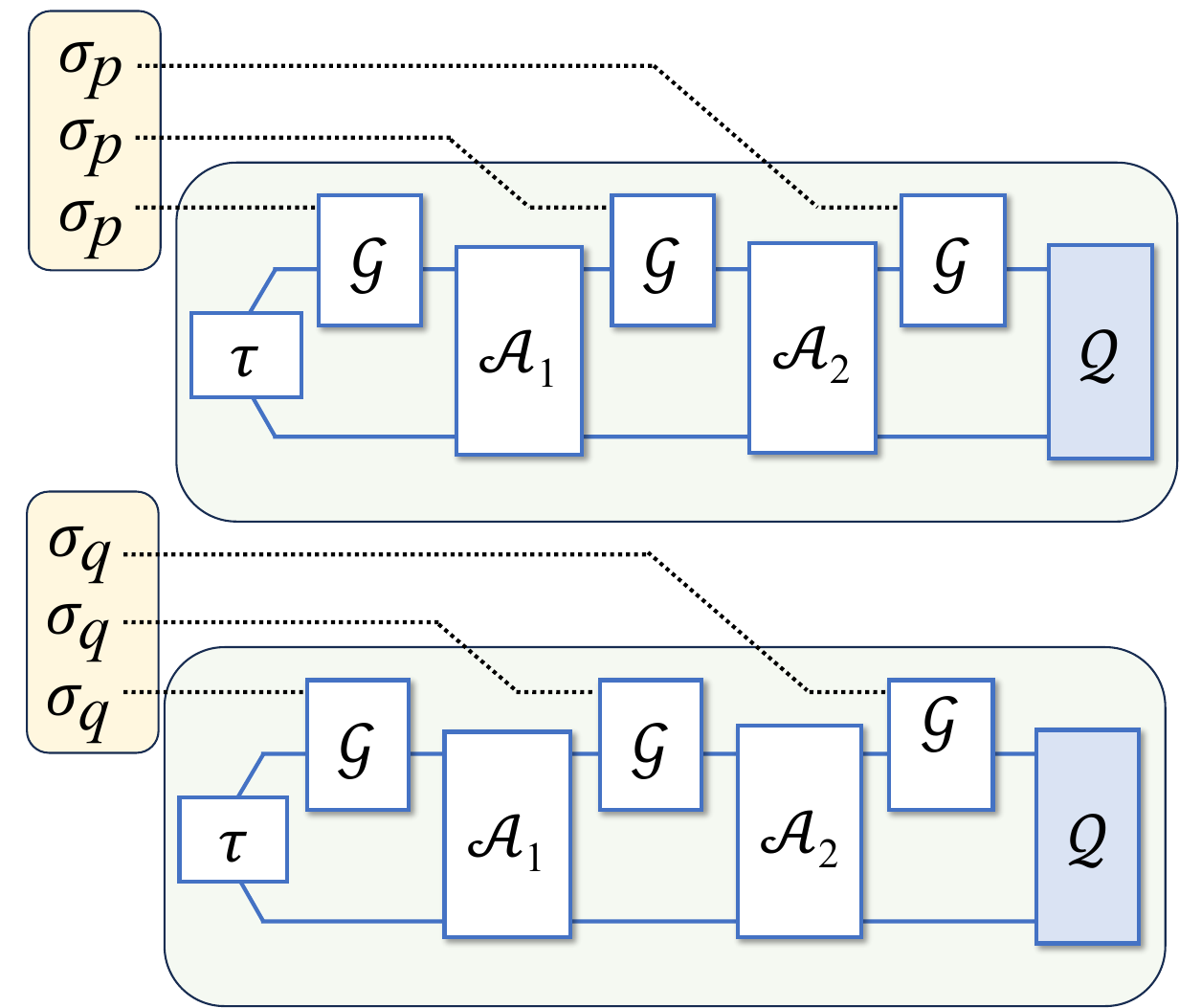}
\caption{\label{fig:env}Channel discrimination and parameter estimation for environment-parameterized channels $\mathcal{D}_p$ and $\mathcal{D}_q$, where the underlying environment states are
$\sigma_p$ and $\sigma_q$, respectively.
The yellow-shaded boxes denote the underlying environmental states to which we do not have access.}
\end{figure}

More formally, first we define an environment state $\sigma_p$ that encodes the probability density $p(\phi)$ in~\eqref{eq:bdc} as follows:
\begin{equation}
\sigma_p \coloneqq  \int_{-\pi}^{\pi} d\phi ~p(\phi)|\phi\rangle\!\langle\phi|,
\label{eq:seed_state}
\end{equation}
where, in the physics literature, $\{|\phi\rangle\}_{\phi}$ is usually interpreted as a set of `eigenkets' obeying the `orthogonality relation' $\langle \phi' | \phi \rangle = \delta(\phi - \phi')$ (i.e., $\{|\phi\rangle\}_{\phi}$ can be 
seen as 
an orthogonal basis for the phase $\phi$). 
We may also interpret~\eqref{eq:seed_state} as a representation of a random variable on $[-\pi,\pi]$ with probability density $p$. Then $\mathcal{D}_p$ decomposes as
\begin{equation}
    \mathcal{D}_p = \mathcal{G} \circ \mathcal{F}_p ,
    \label{eq:bdc-composition}
\end{equation}
where
\begin{align}
\mathcal{F}_p(\rho) & \coloneqq \rho \otimes  \sigma_p , \\ 
\mathcal{G}(\rho' \otimes \rho'') & \coloneqq \int_{-\pi}^{\pi}
d\phi' \, e^{- i \hat{n} \phi'} \rho' e^{ i \hat{n} \phi'} \operatorname{Tr}[| \phi'\rangle\!\langle \phi' | \rho'' ].
\end{align}
The first channel $\mathcal{F}_p$ appends the environment state $\sigma_p$  to the input state $\rho$, while $ \mathcal{G}$ measures $\sigma_p$ and, based on the measured phase $\phi$, applies the unitary phase operator $e^{- i \hat n \phi}$  to $\rho$.
The action of $\mathcal{D}_p$ on an arbitrary input state~$\rho$ is thus as follows:
\begin{equation}
    \mathcal{D}_p(\rho) = \mathcal{G}(\rho \otimes  \sigma_p).
    \label{eq:deph-ch-decompose}
\end{equation}

The implications of the composition in~\eqref{eq:bdc-composition} are far reaching, indeed leading to our optimality bounds.
The idea is that when we decompose the channel this way, we can ``pull back'' the environmental state from our analysis of an adaptive strategy $\mathcal{A}$, as depicted in Figure~\ref{fig:env}. Then, a quantum channel discrimination or estimation task can be recast as a classical state discrimination or estimation task, respectively. Given that the operations in the adaptive strategy $\mathcal{A}$ composed with $n$ instances of the channel~$\mathcal{G}$ are independent of $p$ and $q$,  the optimality of the distinguishability task is then limited by the distinguishability of the environmental states. More formally, every $n$-round adaptive strategy $\mathcal{A}$ for channel discrimination, when applied to the BDC~$\mathcal{D}_p$, can be composed with $n$ calls of the $p$-independent channel~$\mathcal{G}$ to view the resulting strategy as a particular classical test~$t$ performed on the probability density~$p^{\otimes n}$. 
This reasoning then implies the following inequality, which holds for every adaptive strategy~$\mathcal{A}$ applied to BDCs~$\mathcal{D}_p$ and~$\mathcal{D}_q$:
\begin{align}
 \lambda \alpha_n(\mathcal{A})+ (1-\lambda)\beta_n(\mathcal{A}) 
     & \geq  \inf_{t } \big\{ \lambda\alpha_n(t) + (1-\lambda) \beta_n(t) \big\}.
     \label{eq:classical-sym-err-lower}
\end{align}
Since this inequality holds for every adaptive strategy $\mathcal{A}$, we conclude the inequality ``$\geq$'' in~\eqref{eq:symm-err-cl} for BDCs $\mathcal{D}_p$ and $\mathcal{D}_q$. 
Furthermore, for every adaptive strategy $\mathcal{A}$ satisfying $\alpha_n (\mathcal{A}) \leq \varepsilon$, we conclude, by the same reasoning, for BDCs $\mathcal{D}_p$ and $\mathcal{D}_q$ that
\begin{equation}
     \beta_n(\mathcal{A}) \geq   \inf_{t } \left\{ \beta_n(t): \alpha_n (t) \leq \varepsilon \right \}.
     \label{eq:classical-asym-err-lower}
\end{equation}
Since this inequality holds for every adaptive strategy $\mathcal{A}$ satisfying $\alpha_n (\mathcal{A}) \leq \varepsilon$, we conclude the inequality ``$\geq$'' in~\eqref{eq:asymm-err-prob-cl} for BDCs $\mathcal{D}_p$ and $\mathcal{D}_q$.

The same reasoning applies for channel estimation, with respect to a continuous family $(\mathcal{D}_{p_\theta})_\theta$ of parameterized BDCs. Indeed, every $n$-round adaptive strategy~$\mathcal{A}$ for channel estimation can be composed with $n$ calls of the $\theta$-independent channel $\mathcal{L}$ to view the resulting strategy as a particular randomized estimator $t$ performed on the probability density $p_\theta^{\otimes n}$. We then conclude the following inequality for every continuous family $(\mathcal{D}_{p_\theta})_\theta$ of parameterized BDCs:
\begin{equation}
     r_n(\theta, \mathcal{A}) \geq  \inf_t r_n(\theta, t).
\end{equation}
Since this inequality holds for every adaptive strategy $\mathcal{A}$, we conclude the inequality ``$\geq$'' in~\eqref{eq:ch-est-main-result} for every continuous family $(\mathcal{D}_{p_\theta})_\theta$ of parameterized BDCs.

\section{Attainability}
\label{sec:attainability}

Now we prove the other side (``attainability'') of the equalities in~\eqref{eq:symm-err-cl}, \eqref{eq:asymm-err-prob-cl}, and~\eqref{eq:ch-est-main-result}. Again here, the basic principle behind our reasoning is simple. As we will show, for a BDC~$\mathcal{D}_p$, it is possible to input a sequence $(\rho_\nu)_{\nu \in \mathbb{N}}$ of states to it and perform a POVM $(M_\phi)_{\phi}$ such that, for all $\phi\in[-\pi,\pi]$,
\begin{equation}
    p(\phi) = \lim_{\nu \to \infty} p_\nu(\phi) ,
    \label{eq:att-limit-bdc}
\end{equation}
where
\begin{equation}
    p_\nu(\phi) \coloneqq \operatorname{Tr}[M_\phi \mathcal{D}_p(\rho_\nu)].
    \label{eq:p_nu-dens}
\end{equation}
In the formulation above, we have used $\nu$ as an abstract index for a sequence of states. In Sections~\ref{sec:phot-num-sup-scheme}--\ref{sec:coh-state-scheme}, we provide concrete examples in which $\nu$ is replaced by photon number or used as an index for a sequence of coherent states with increasing energy.
A channel satisfying~\eqref{eq:att-limit-bdc}--\eqref{eq:p_nu-dens} is said to be environment seizable~\cite[Definition~36]{wilde2020amortized} because it is possible to perform pre- and post-processing of the channel in order to ``seize" the background environment state.
In this case, we can recover the probability density $p(\phi)$, characterizing a BDC $\mathcal{D}_p$, exactly in the $\nu \to \infty$ limit and process it directly. It is similarly possible to do this for all  $n\in \mathbb{N}$ because, for all $\phi^n \in [-\pi,\pi]^n$,
\begin{equation}
    p^{\otimes n}(\phi^n) = \lim_{\nu \to \infty} p^{\otimes n}_\nu(\phi^n),
\end{equation}
as a direct consequence of~\eqref{eq:att-limit-bdc}. 
Thus, a particular sequence of strategies for channel discrimination is to input the state~$\rho_\nu$ to every channel use, followed by the measurement~$M_\phi$, leading to the density~$p^{\otimes n}_\nu(\phi)$. We then process the resulting densities with a classical test $t$. As we will see shortly, such a sequence of strategies is optimal in the limit $\nu \to \infty$.

In the case that~\eqref{eq:att-limit-bdc} holds, it directly follows that the type~I and type~II error probabilities under an arbitrary test $t$ obey the following equalities:
\begin{equation}
    \alpha_n(t)  = \lim_{\nu \to \infty}  \alpha_{n,\nu}(t), \label{eq:type-i-ii-limit} \qquad 
    \beta_n(t)  = \lim_{\nu \to \infty}  \beta_{n,\nu}(t),
\end{equation}
where
\begin{align}
    \alpha_{n,\nu}(t) & \coloneqq  1 - \int d\phi^n\,  t(\phi^n)\, p_\nu^{\otimes n}(\phi^n), \\
    \beta_{n,\nu}(t) & \coloneqq   \int d\phi^n\,  t(\phi^n) \, q_\nu^{\otimes n}(\phi^n),
\end{align}
and with $q_\nu(\phi)$ defined as in~\eqref{eq:p_nu-dens}, but with $\mathcal{D}_p$ replaced by $\mathcal{D}_q$.

As a consequence of~\eqref{eq:type-i-ii-limit}, if we can show that there exists a sequence $(\rho_\nu)_{\nu \in \mathbb{N}}$ of states and a measurement $M_\phi$ such that~\eqref{eq:att-limit-bdc} holds, then the desired attainability claims hold because the aforementioned strategy is a particular kind of adaptive strategy~$\mathcal{A}$; that is, for every test~$t$ and for every test $t'$ such that $\alpha_n (t') \leq \varepsilon$,
\begin{equation}
\begin{split}
\inf_{\mathcal{A}} \big\{\lambda \alpha_n(\mathcal{A})+ (1-\lambda)\beta_n(\mathcal{A}) \big\}
     & \leq \lambda\alpha_n(t) + (1-\lambda) \beta_n(t) , \\
     \inf_{\mathcal{A}} \left\{ \beta_n(\mathcal{A}): \alpha_n (\mathcal{A}) \leq \varepsilon \right \}  & \leq  \beta_n(t').
     \end{split}
     \label{eq:attainability-final-step-pf}
\end{equation}
Since the inequalities hold for every test $t$ and for every test $t'$ such that $\alpha_n (t') \leq \varepsilon$, we conclude that the same inequalities hold with infima taken on the right-hand side. Combining this claim with the optimality results from the previous section 
concludes the proof of the desired equalities in~\eqref{eq:symm-err-cl} and~\eqref{eq:asymm-err-prob-cl}.

We can make similar conclusions for channel estimation for a continuous family $(\mathcal{D}_{p_\theta})_\theta$ of parameterized BDCs. Indeed, in the case that~\eqref{eq:att-limit-bdc} holds and the estimator $t$ satisfies 
\begin{equation}
\int d\hat{\theta} \, \int d \phi^n\,  t(\hat{\theta}|\phi^n)\, c(\hat{\theta}, \theta) < \infty, 
\end{equation}
a simple application of Lebesgue's dominated convergence theorem shows that
\begin{equation}
r_n(\theta, t) = \lim_{\nu \to \infty} r_{n,\nu}(\theta, t),
\label{eq:risk-convergence}
\end{equation}
where
\begin{align}
    r_{n,\nu}(\theta, t) & \coloneqq \int d\hat{\theta} \, \int d \phi^n\,  t(\hat{\theta}|\phi^n)\,  p_{\theta,\nu}^{\otimes n}(\phi^n) \, c(\hat{\theta}, \theta), \label{eq:risk-limit} \\
    p_{\theta,\nu}(\phi) & \coloneqq \operatorname{Tr}[M_\phi \mathcal{D}_{p_\theta}(\rho_\nu)].
\end{align}
As a consequence of~\eqref{eq:risk-limit}, and similar reasoning used above in~\eqref{eq:attainability-final-step-pf}, we conclude the following attainability inequality:
\begin{equation}
\inf_{\mathcal{A}} r_n(\theta, \mathcal{A})
      \leq \inf_{t} r_n(\theta, t) ,
\end{equation}
which thus finishes the proof of our main channel estimation result in~\eqref{eq:ch-est-main-result}.

In the two subsections that follow, we exhibit two specific schemes for which the needed equality in~\eqref{eq:att-limit-bdc} holds. Moreover, the methods are simple to describe in physical terms, involving either 1) preparation of a uniform superposition of photon-number states at the input and a quantum Fourier transform followed by photon-number measurement and classical post-processing at the output or 2) preparation of a coherent state at the input and heterodyne detection followed by classical post-processing at the output. See Figure~\ref{fig:attainability} for a visual depiction of the two methods. The latter method  is robust to loss in the channel in addition to dephasing, due to the fact that the purity of coherent states is retained under a pure-loss channel (see Section~\ref{sec:deph-plus-loss} for more discussions of this point). The first scheme is similar to that introduced in~\cite{HB93}, and the measurement used in the first scheme can be considered an approximation of the canonical phase measurement, also discussed in \cite{Wiseman_1998}. The second scheme has been considered in \cite{Wiseman_1998}. 

\begin{figure}
    \centering
   \includegraphics[width=\linewidth]{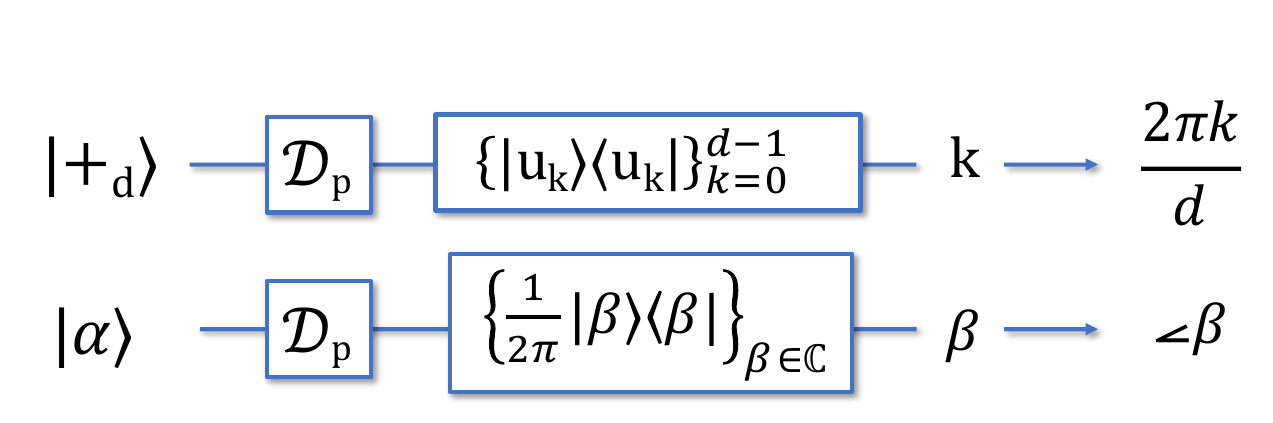}
    \caption{Two measurement methods that achieve the optimality bounds. The method on the top (see Section~\ref{sec:phot-num-sup-scheme}) involves preparing a uniform superposition of photon-number states, transmitting through the BDC $\mathcal{D}_{p}$, performing a Fourier transform, followed by photodetection and classical post-processing. The method on the bottom (see Section~\ref{sec:coh-state-scheme}) involves preparing a coherent state $|\alpha\rangle$, where $\alpha \in \mathbb{R}_+$, transmitting through the BDC $\mathcal{D}_{p}$, performing heterodyne detection, followed by classical post-processing. }
    \label{fig:attainability}
\end{figure}

\subsection{Photon-number-superposition method}

\label{sec:phot-num-sup-scheme}

As stated above, this method involves preparing a uniform superposition of photon-number states at the input and performing a quantum Fourier transform, followed by photon detection and classical post-processing at the output.
Photon-number superposition states have been 
well investigated in the context of optical phase estimation (see  \cite{demkowicz2015quantum,PhysRevA.90.023856,rodriguez2022determination,cooper2013experimental,dowling2008quantum}). The scheme we consider below is quite similar to that proposed in  \cite{HB93}.

Let us begin by defining a $d$-level, uniform  superposition of photon-number states:
\begin{align}
\label{eq:+_d}
|+_{d}\rangle\coloneqq \frac{1}{\sqrt{d}}\sum_{n=0}^{d-1}|n\rangle,
\end{align}
where $|n\rangle$ is a photon-number state~\cite{gerry2004introductory}. A property of $\ket{+_d}$ is that phases become encoded into it as follows:
\begin{align}
e^{- i\hat{n}\phi}|+_{d}\rangle=\frac{1}{\sqrt{d}}\sum_{n=0}^{d-1}e^{-
in\phi}|n\rangle.
\end{align}
Such encoded phases can be recovered approximately by performing a measurement in the Fourier basis, which is defined for all $k\in\left\{  0,\ldots,d-1\right\}  $
as
\begin{align}
\label{eq:fourier-basis-states}
|u_{k}\rangle\coloneqq \frac{1}{\sqrt{d}}\sum_{n=0}^{d-1}e^{-2\pi ikn/d}|n\rangle.
\end{align}
Indeed, as shown in Appendix~\ref{app:phot-num-fourier-dist}, we find that the probability of measuring $k\in\left\{  0,\ldots,d-1\right\}  $ is as follows:
\begin{equation}
\begin{split}
p_d(k|\phi) \coloneqq&\ \left\vert \langle u_{k}|e^{- i\hat{n}\phi}|+_{d}\rangle\right\vert ^{2}  \\
=&\ \frac{1}{d^{2}}\frac{\sin^{2}\!\left(  \pi k-\frac{d\phi}{2}  \right)}{\sin^{2}\!\left(  \frac{\pi k}{d}-\frac{\phi}{2} \right)  }\, .
\end{split}
\label{eq:phot-num-fourier-dist}
\end{equation}
The function on the last line above is 
proportional to the Fej\'er kernel, a well-known object in Fourier analysis~\cite{hoffman2007banach}; as a function of $\phi$, it is peaked at $\phi = 2\pi k/d$.
By classical post-processing applied to the value~$2\pi k/ d$
(i.e., performing the shift  $2\pi k/d - 2\pi$     if $ 2\pi k/d \in (\pi,2\pi)$, adding uniform noise chosen from the interval $[0,2\pi/d]$ of size $2\pi/d$), the discrete probability distribution $p_d(k|\phi)$ can be smoothed into a continuous probability density. Mathematically, this corresponds to the probability mass function $p_d(k|\phi)$ being convolved with a rectangle function $\Pi_d(x)$ associated with a uniform density of width $2\pi/d$, leading to the probability density
\begin{equation}
p_d(\hat{\phi}|\phi) \coloneqq  \sum_{k=0}^{d-1}p_{d}(k|\phi) \,\Pi_d\!\left( \hat{\phi} - \frac{2\pi k}{d} \,\,\mathrm{mod}\,\, 2\pi \right) 
\label{eq:smoothed-density}
\end{equation}
where $x\,\,\mathrm{mod}\,\, 2\pi \coloneqq \min\left\{ x+2\pi k:\, x+2\pi k\geq 0,\, k\in \mathbb{Z} \right\}$, and $\Pi_{d}(x)$ is defined as
\begin{equation}
\begin{aligned}
\Pi_{d}(x)\coloneqq \left\{
\begin{array}
[c]{ll}%
0 & : x<0\text{ or } x \geq \frac{2\pi}{d},\\[1ex]
\frac{d}{2\pi } & : x\geq 0 \text{ and } x < \frac{2\pi}{d}.
\end{array}
\right.
\end{aligned}
\label{eq:Pi_d}
\end{equation}
It then follows for every probability density $p(\phi)$ that
\begin{equation}
\lim_{d\rightarrow\infty}\int_{-\pi}^{\pi } d\hat{\phi}\ \left\vert p(\hat{\phi})- \int_{-\pi}^{\pi}d\phi\ p(\phi)p_{d}(\hat{\phi}|\phi)\right\vert =0,
\label{eq:convergence-photon-numb-fourier-meas}
\end{equation}
which is essentially equivalent to the less formal statement that $p_d(\hat{\phi}|\phi)$ converges to the Dirac delta function $\delta(\hat{\phi}-\phi)$ in the $d \to \infty$ limit. The above convergence statement in \eqref{eq:convergence-photon-numb-fourier-meas} is proved in a rigorous way in Lemma~\ref{pns_convergence_lemma} in Appendix~\ref{app:phot-num-fourier-dist}.

Now applying this reasoning to the BDC $\mathcal{D}_p$, we find from a direct application of~\eqref{eq:phot-num-fourier-dist} that
\begin{align}
    \operatorname{Tr}[|u_{k}\rangle\!\langle u_{k}|\mathcal{D}_{p}(|+_{d}%
\rangle\!\langle+_{d}|)]&=\int_{-\pi}^{\pi }d\phi\ p(\phi)p_{d}(k|\phi) ,
\label{eq:photon-num-fourier-meas-BDC-other}
\end{align}
Let us then denote by $(M_\phi)_\phi$ the measurement that 1) performs a Fourier basis measurement $\{|u_{k}\rangle\!\langle u_{k}|\}_k$ with outcome~$k$, 2) calculates the value $\phi = 2\pi k/ d$  and shifts by $-2\pi$ if $\phi \in (\pi,2\pi)$, and 3) finally adds to this value uniform noise selected from an interval of size $2\pi /d$ to produce an outcome $\phi$. It then follows as a consequence of~\eqref{eq:convergence-photon-numb-fourier-meas} and~\eqref{eq:photon-num-fourier-meas-BDC-other} that
\begin{equation}
    p(\phi) = \lim_{d\to \infty} \operatorname{Tr}[M_\phi \mathcal{D}_p(|+_{d}\rangle\!\langle+_{d}|)],
\end{equation}
concluding our proof of~\eqref{eq:att-limit-bdc} for this scheme.

\subsection{Coherent-state method}

\label{sec:coh-state-scheme}

This method involves preparing a coherent state at the input and performing heterodyne detection and classical post-processing at the output, which is a routine method  for optical phase estimation (see, e.g.,~\cite{PhysRevA.87.050303,Wiseman_1998,martin2020implementation,PhysRevLett.106.153603,PhysRevA.92.012317}).
The coherent-state method is easier to implement in practice than the photon-number-superposition method.   In this approach, the initial state is given
by the following coherent state:
\begin{align}
|\alpha\rangle\coloneqq e^{-\frac{1}{2}\left\vert \alpha\right\vert ^{2}}\sum
_{n=0}^{\infty}\frac{\alpha^{n}}{\sqrt{n!}}|n\rangle,
\end{align}
where $\alpha \in \mathbb{C}$. For the scheme we use here, we fix $\alpha \in \mathbb{R}_+$.
After the phase rotation $e^{- i\hat{n}\phi}$ acts, the state becomes
\begin{align}
\label{eq:coh_rot}
e^{- i\hat{n}\phi}|\alpha\rangle &   =|\alpha e^{- i\phi}\rangle,
\end{align}
as reviewed in Appendix~\ref{app:coh-state-method}.
Performing a heterodyne measurement (with POVM elements $\{\frac{1}{\pi}|\beta\rangle\!\langle\beta|\}_{\beta \in\mathbb{C}}$) 
on the state in~\eqref{eq:coh_rot} then leads to the measurement outcome~$\beta$.
The final step (classical post-processing) is to compute the argument of $\beta$, i.e.,  $\hat \phi \coloneqq \arg(\beta)$, as an estimate of the phase~$\phi$; the probability density for $\hat \phi$ is known as the Rician phase distribution and is given by~\cite[Eqs.~(10) \& (20)]{LZJ20}
\begin{multline}
p_{\alpha}(\hat{\phi}|\phi)\coloneqq \frac{e^{-\left\vert \alpha\right\vert ^{2}%
}}{2\pi}
+\frac{1}{2}\frac{\left\vert \alpha\right\vert }{\sqrt{\pi}}\cos
(\hat{\phi}-\phi )e^{-\left\vert \alpha\right\vert ^{2}\sin
^{2}(\hat{\phi}-\phi)}\times \\
\left[  1+\operatorname{erf}(\left\vert
\alpha\right\vert \cos(\hat{\phi}-\phi ))\right]  .
\label{eq:Rician}
\end{multline}
See 
Appendix~\ref{app:rician-phase-deriv} for a derivation of the Rician phase probability density, provided for convenience. Notably, this probability density is highly peaked at $\hat{\phi}=\phi$ and converges to a Dirac delta function in the following sense:
\begin{align}
\lim_{\alpha \rightarrow\infty}\int_{-\pi}^{\pi} d\hat{\phi}
\left |p(\hat{\phi}) - \int_{-\pi}^{\pi} d\phi
\ p_{\alpha }(\hat{\phi}
|\phi)\, p(\phi)\right |,
\label{eq:coh-state-sch-converge}
\end{align}
where $p(\phi)$ is an arbitrary probability density defined on the interval $[-\pi,\pi]$. We provide a rigorous statement of the above convergence in Lemma~\ref{coherent_state_method_convergence_lemma} in Appendix~\ref{app:coh-state-method}.

Finally, denoting by $(M_\phi)_\phi$ the measurement that 1) performs heterodyne detection with outcome $\beta$ and 2) outputs the value $\phi = \operatorname{arg}(\beta)$, it follows as a consequence of~\eqref{eq:coh-state-sch-converge} that
\begin{equation}
    p(\phi) = \lim_{\alpha \to \infty} \operatorname{Tr}[M_\phi \mathcal{D}_p(|\alpha\rangle\!\langle \alpha |)],
    \label{eq:coh-state-sch-lim}
\end{equation}
concluding our proof of~\eqref{eq:att-limit-bdc} for this scheme. Let us remark that an explicit form for the POVM $(M_\phi)_\phi$ was obtained in \cite[Eq.~(3.10)]{Wiseman_1998} and is as follows:
\begin{equation}
    M_\phi = \frac{1}{2\pi} \sum_{m,n=0}^{\infty} |m\rangle\!\langle n| e^{i \phi (m-n)}
    \frac{\Gamma\!\left(\frac{n+m}{2}+1\right)}{\sqrt{n!m!}}.
\end{equation}

\subsection{Comparison of methods for finite energy}

Let us compare the performance of the photon-number-superposition and coherent-state methods to the fundamental limit when there is an energy constraint in place. In particular, let us consider channel discrimination (asymmetric error) of two bosonic dephasing channels $\mathcal{D}_{p_{\gamma_1}}$ and $\mathcal{D}_{p_{\gamma_2}}$, for which the underlying probability densities $p_{\gamma_1}$ and $p_{\gamma_2}$ are wrapped normal distributions with respective variances $\gamma_1$ and $\gamma_2$. Figure~\ref{fig:RE_ratio} illustrates how quickly the relative entropy of these schemes converges to the optimal relative entropy.
For the photon-number-superposition scheme, the probability density as a function of $d$ is given by 
\begin{equation}
    p_{d,\gamma}(\hat \phi) \coloneqq 
\label{eq:photon-num-fourier-meas-BDC}
\int_{-\pi}^{\pi }d\phi\ p_\gamma (\phi)\, p_{d}(\hat{\phi}|\phi) ,
\end{equation}
where $p_{d}(\hat{\phi}|\phi)$ is defined in \eqref{eq:smoothed-density},
from which we can calculate the relative entropy $D(p_{d,\gamma_1}\Vert p_{d,\gamma_2})$ of this scheme as a function of~$d$.
For the coherent-state scheme,
the probability density as a function of $\alpha$ is given by
\begin{align}
p_{\alpha,\gamma}(\hat \phi) &\coloneqq  \int_{-\pi}^\pi d\phi ~p_\alpha(\hat\phi|\phi)\,  p_\gamma(\phi),
\end{align}
from which we can calculate the relative entropy $D(p_{\alpha,\gamma_1} \Vert\, p_{\alpha,\gamma_2} )$ for this scheme as a function of $\alpha$.
Interestingly, Figure~\ref{fig:RE_ratio} indicates that these schemes in practice come close to achieving the fundamental limit, and we also see that the coherent-state scheme has an advantage over the photon-number scheme for the same fixed energy, given that the mean photon number is 9.5 for the state $|+_d\rangle$ in \eqref{eq:+_d} when~$d=20$.

\begin{figure}
\includegraphics[width=\linewidth]{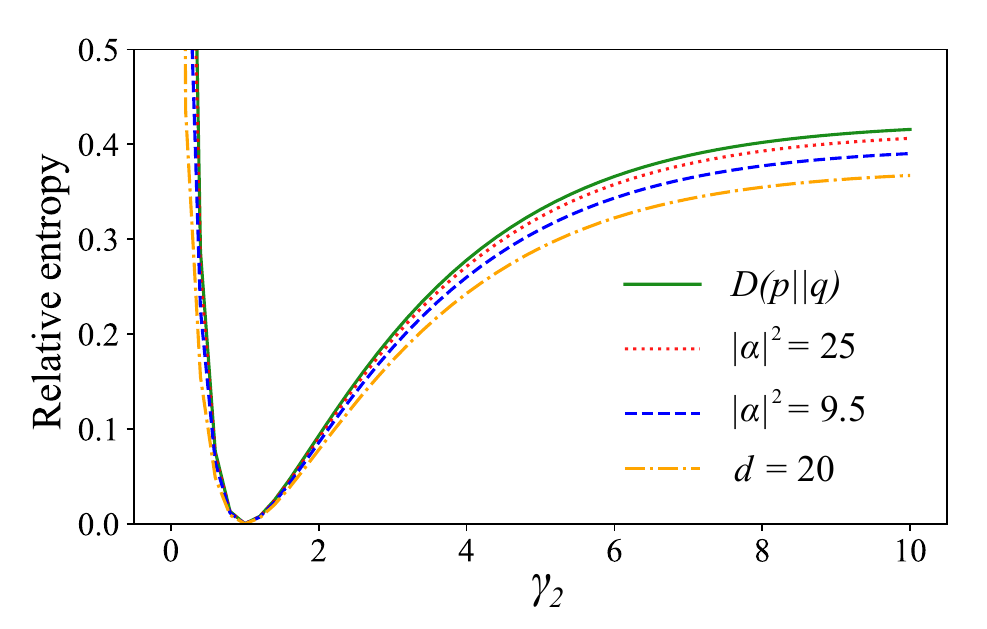}
\caption{\label{fig:RE_ratio} 
Comparison of photon-number-superposition and coherent-state schemes to the fundamental limit, when considering channel discrimination in the setting of asymmetric error. The figure plots the various relative entropies when $\gamma_1 = 1$; here $d=20$ for the photon-number-superposition scheme and $|\alpha|^2 \in \{9.5,25\}$ for the coherent-state scheme. Note that the average energy of the input state is the same for $|\alpha|^2 = 9.5$ and $d=20$.}
\end{figure}

\color{black}

\section{Bosonic dephasing and loss}

\label{sec:deph-plus-loss}

Our results apply more generally to a scenario that involves photon loss in addition to dephasing. This indicates a certain robustness of our results, since we expect to encounter photon loss in any realistic scenario. Namely, suppose that the two channels to distinguish are $\mathcal{L}_\eta \circ \mathcal{D}_p$ and $\mathcal{L}_\eta \circ \mathcal{D}_q$, where $\mathcal{L}_\eta$ is a pure-loss bosonic channel of transmissivity $\eta \in (0,1] $. This composite channel has been studied in the context of quantum communication, under the name  bosonic loss-dephasing channel~\cite{Leviant2022, loss-dephasing}. Our main observation here is that the distinguishability of these channels in all scenarios considered is no different from the distinguishability of $\mathcal{D}_p$ and $\mathcal{D}_q$. Thus, all results stated above for $\mathcal{D}_p$ and $\mathcal{D}_q$ hold also for~$\mathcal{L}_\eta \circ \mathcal{D}_p$ and $\mathcal{L}_\eta \circ \mathcal{D}_q$.

The optimality part of this claim follows by similar reasoning used in Section~\ref{sec:optimality}. That is, since the pure-loss channel $\mathcal{L}_\eta$ is common to both  $\mathcal{L}_\eta \circ \mathcal{D}_p$ and $\mathcal{L}_\eta \circ \mathcal{D}_q$, it can be considered as part of an adaptive strategy used to discriminate these channels, and so their distinguishability is still limited by the classical environmental states $\sigma_p$ and $\sigma_q$. The attainability part follows by using the scheme from Section~\ref{sec:attainability} and the fact that coherent states retain their purity after the action of a pure-loss channel. That is, $\mathcal{L}_\eta(|\alpha \rangle \! \langle \alpha |) = |\sqrt{\eta}\alpha \rangle \! \langle \sqrt{\eta} \alpha |$. Then the following limit holds by applying the same reasoning used to justify~\eqref{eq:coh-state-sch-lim}:
\begin{equation}
\begin{split}
    p(\phi) & = \lim_{\alpha \to \infty} \operatorname{Tr}[M_\phi (\mathcal{L}_\eta \circ \mathcal{D}_p)(|\alpha\rangle\!\langle \alpha |)], \\
    & = \lim_{\alpha \to \infty} \operatorname{Tr}[M_\phi  \mathcal{D}_p(|\sqrt{\eta} \alpha \rangle \! \langle \sqrt{\eta} \alpha  |)].
\end{split}
    \label{eq:limit-with-loss}
\end{equation}
Here we also used the fact that dephasing channels and pure-loss channels commute; i.e., $\mathcal{L}_\eta \circ \mathcal{D}_p =   \mathcal{D}_p \circ \mathcal{L}_\eta$.

Similarly, all estimation results stated above for the parameterized family $(\mathcal{D}_{p_\theta})_{\theta \in \Theta}$ hold also for the parameterized family $(\mathcal{L}_\eta \circ \mathcal{D}_{p_\theta})_{\theta \in \Theta}$. This follows from the same reasoning given for the discrimination setting. Namely, the optimality part follows because all estimation strategies for the family $(\mathcal{L}_\eta \circ \mathcal{D}_{p_\theta})_{\theta \in \Theta}$ are limited by those of the family $(p_\theta)_{\theta \in \Theta}$ of probability densities. Then for the attainability part, the equality in~\eqref{eq:limit-with-loss} applies, allowing us to apply the reasoning in Section~\ref{sec:attainability} again.

Finally, numerical estimates using the probability distribution derived in Appendix~\ref{app:lossy-fejer} indicate that the photon-number-superposition method from Section~\ref{sec:phot-num-sup-scheme} might be optimal also in the presence of loss, provided that one considers the limit of infinite energy. That is, although the uniform superposition state in~\eqref{eq:+_d} is affected detrimentally by loss, it seems to retain sufficiently high coherence to effectively detect a phase-space rotation. The coherent-state scheme from Section~\ref{sec:coh-state-scheme}, however, might still have an advantage over the photon-number-superposition method also in the presence of loss if one considers the finite-energy setting; Figure~\ref{fig:RE_ratio} illustrates that this is indeed the case for channel discrimination in the setting of asymmetric error.

\section{Conclusion}

\label{sec:conclusion}

In conclusion, we have determined the fundamental limits of discrimination and estimation for BDCs, complementing the recent results of~\cite{lami2023exact} on communication. Not only have we accomplished this for asymptotic quantities like the symmetric and asymmetric error exponents for channel discrimination, but we have also done so for the underlying fundamental, operational quantities like the symmetric and asymmetric error probabilities of an arbitrary $n$-round adaptive strategy (see~\eqref{eq:symm-err-cl} and~\eqref{eq:asymm-err-prob-cl}, respectively). We have done the same for the main operational quantity in channel estimation, the risk of an $n$-round adaptive strategy (see~\eqref{eq:ch-est-main-result}). The main ideas for these results relied on the method of simulation from~\cite{matsumoto2010metric}, for the optimality part, and to exhibit a sequence of strategies that pre- and post-process a BDC to recover its underlying probability density, for the attainability part. This is similar in spirit to previous results of~\cite{PhysRevLett.118.100502,takeoka2016optimal}.

Going forward from here, the main pressing open question is to determine the limits for these tasks whenever there is a realistic energy constraint in place. More specifically, we think it is interesting to determine which scheme, either the photon-number-superposition scheme from Section~\ref{sec:phot-num-sup-scheme} or the coherent-state scheme from Section~\ref{sec:coh-state-scheme}, performs better in the finite-energy regime, as well as in the case that there is photon loss in addition to dephasing. There are certainly other schemes besides these two to consider as well. Furthermore, given that our findings in Section~\ref{sec:deph-plus-loss} only apply when the transmissivity parameter $\eta$ is fixed, it is open to determine the limits of discrimination and estimation when the transmissivity parameter varies in addition to  the dephasing channel.

Another natural generalization of our results is to the case of an arbitrary random unitary channel of the form 
\begin{equation}
\mathcal{N}_{p,H}(\rho) \coloneqq \int_{-\infty}^{+\infty} dt\ p(t)\, e^{-iHt}\rho\, e^{iHt} ,
\end{equation}
where $p$ is a probability density on the real line and $H$ is a general Hamiltonian. The same simulation arguments from Section~\ref{sec:optimality} allow for concluding optimality bounds, that all adaptive strategies for discriminating or estimating channels from this class are limited by the underlying classical probability densities. Based on the insights from~\cite[Proposition~2]{VV-diamond}, we expect that seizing the underlying probability density $p$ might be possible for a large class of Hamiltonians. If that is the case, then our results could be extended far beyond the setting we considered here.

\bigskip

\textit{Data availability statement}---All codes used to generate the figures in this paper are available with the arXiv posting of this paper as arXiv ancillary files.

\begin{acknowledgements}

We thank Francisco Elohim Becerra, Maison Clouâtré, Bunyamin Kartal, Ufuk Keskin, Marco A. Rodr\'iguez-Garc\'ia, and  Moe Win for helpful comments.
ZH is supported by a Sydney Quantum Academy Postdoctoral Fellowship
and an ARC DECRA Fellowship (DE230100144) ``Quantum-enabled super-resolution imaging''. 
MMW acknowledges support from the National Science Foundation under Grant No.~2304816 and is grateful to CWI Amsterdam for hospitality during a research visit when this work was finalized.

\end{acknowledgements}

\bibliography{BDC}

\appendix

\section{Equivalence of hypothesis testing regions}

\label{app:hypo-test-reg-equiv}

In this appendix, we prove that the following equality holds for every pair of bosonic dephasing channels $\mathcal{D}_p$ and $\mathcal{D}_q$:
\begin{equation}
    \left\{\left(\alpha_n (\mathcal{A}), \beta_n (\mathcal{A})\right)\right\}_{\mathcal{A}}=\left\{\left(\alpha_n (t), \beta_n (t)\right)\right\}_{t},
    \label{eq:hypo-test-reg-equiv}
\end{equation}
where $\alpha_n (\mathcal{A})$ and $\beta_n (\mathcal{A})$ are defined in~\eqref{eq:ch-disc-err-probs} and  $\alpha_n(t)$ and $\beta_n(t)$ are defined in~\eqref{eq:cl-disc-err-prob-a}--\eqref{eq:cl-disc-err-prob-b}. These sets are known as hypothesis testing regions and have been studied for a long time in statistics \cite{neyman1933ix} (see also \cite{Renes2016,BG2017} for more recent works in quantum information). Furthermore, the quantities in \eqref{eq:symm-err-cl} and \eqref{eq:asymm-err-prob-cl} can be understood as various boundary points of this hypothesis testing region. As such, the equalities in \eqref{eq:symm-err-cl} and \eqref{eq:asymm-err-prob-cl} follow as a consequence of \eqref{eq:hypo-test-reg-equiv}.

The containment
\begin{equation}
    \left\{\left(\alpha_n (\mathcal{A}), \beta_n (\mathcal{A})\right)\right\}_{\mathcal{A}}
    \subseteq
    \left\{\left(\alpha_n (t), \beta_n (t)\right)\right\}_{t},
\end{equation}
follows as a consequence of the same reasoning used to establish \eqref{eq:classical-sym-err-lower} and \eqref{eq:classical-asym-err-lower} and can again be understood by examining Figure~\ref{fig:env}. Indeed, every adaptive strategy $\mathcal{A}$ for distinguishing the BDCs $\mathcal{D}_p$ and $\mathcal{D}_q$ can be understood as a particular classical test $t$ for distinguishing the underlying densities $p$ and $q$. As such, the region of achievable pairs using quantum adaptive strategies is contained in the region of achievable error pairs for the underlying densities.

The other containment
\begin{equation}
    \left\{\left(\alpha_n (t), \beta_n (t)\right)\right\}_{t}
    \subseteq
    \left\{\left(\alpha_n (\mathcal{A}), \beta_n (\mathcal{A})\right)\right\}_{\mathcal{A}}
\end{equation}
follows by employing either one of the two strategies from Section~\ref{sec:phot-num-sup-scheme} or \ref{sec:coh-state-scheme}. Indeed, in the large energy limit, it is possible to employ either one of these two strategies and obtain $n$ samples of the underlying densities $p^{\otimes n}$ or~$q^{\otimes n}$. Once the samples are in hand, one can then perform an arbitrary classical test $t$ on them.

\section{Calculations for photon-number-superposition method}

\label{app:phot-num-fourier-dist}

Recall the definitions of $|+_{d}\rangle$ and $|u_k\rangle$ in~\eqref{eq:+_d} and~\eqref{eq:fourier-basis-states}, respectively. Measuring the state $e^{- i\hat{n}\phi}|+_{d}\rangle$ in the Fourier basis leads to the following
outcome probabilities:
\begin{align}
& \left\vert \langle u_{k}|e^{- i\hat{n}\phi}|+_{d}\rangle\right\vert ^{2}
 \notag \\
& =\left\vert \left(  \frac{1}{\sqrt{d}}\sum_{n^{\prime}=0}^{d-1}e^{2\pi
ikn^{\prime}/d}\langle n^{\prime}|\right)  \left(  \frac{1}{\sqrt{d}}%
\sum_{n=0}^{d-1}e^{- in\phi}|n\rangle\right)  \right\vert ^{2}\\
&  =\frac{1}{d^{2}}\left\vert \sum_{n^{\prime},n=0}^{d-1}e^{2\pi ikn^{\prime
}/d}e^{- i n\phi}\langle n^{\prime}|n\rangle\right\vert ^{2}\\
&  =\frac{1}{d^{2}}\left\vert \sum_{n=0}^{d-1}\exp\!\left(   in\left(
\frac{2\pi k}{d}-\phi\right)  \right)  \right\vert ^{2}\\
&  =\frac{1}{d^{2}}\left\vert \frac{1-\exp\!\left(   i\left(  2\pi k-d\phi
\right)  \right)  }{1-\exp\!\left(  i\left(  \frac{ 2\pi k}{d}-\phi\right)
\right)  }\right\vert ^{2}\\
&  =\frac{1}{d^{2}} \frac{2\left(1-\cos(     2\pi k-d\phi
  )  \right)}{2\left(1-\cos\!\left(    \frac{ 2\pi k}{d}-\phi\right) \right) }\\
&  = 
\frac{1}{d^{2}}\frac{\sin^{2}\!\left(  \pi k-\frac{d\phi}{2}  \right)}{\sin^{2}\!\left(  \frac{\pi k}{d}-\frac{\phi}{2} \right)  }\, ,
\end{align}
thus justifying~\eqref{eq:phot-num-fourier-dist}.

The lemma below rigorously justifies the convergence statement asserted in~\eqref{eq:convergence-photon-numb-fourier-meas}.

\begin{lemma} \label{pns_convergence_lemma}
Let $p:[-\pi,\pi]\to \mathbb{R}_+$ be a continuous non-negative function with $p(-\pi)=p(\pi)$ and $\int_{-\pi}^\pi d\phi\ p(\phi) = 1$. For every positive integer $d$, all $k\in \{0,1,\ldots, d-1\}$, and all $\hat{\phi}\in [-\pi,\pi]$, set
\begin{equation}
q_{d}(\hat{\phi}|k) \coloneqq \Pi_d\!\left( \hat{\phi} - \frac{2\pi k}{d} \,\,\mathrm{mod}\,\, 2\pi \right) ,
\end{equation}
where $\Pi_d$ is defined by~\eqref{eq:Pi_d}, and
\begin{equation}
x \,\,\mathrm{mod}\,\, 2\pi \coloneqq \min\left\{ x+2\pi k:\, x+2\pi k\geq 0,\, k\in \mathbb{Z} \right\}.    
\end{equation}
Then the function $p'_d:[-\pi,\pi]\to \mathbb{R}_+$ defined by
\begin{equation}
p'_d(\hat{\phi}) \coloneqq \sum_{k=0}^{d-1} q_d(\hat{\phi}|k)\, \langle u_{k}\vert \mathcal{D}_p\left( |+_{d}\rangle\!\langle +_{d}| \right) \vert u_k\rangle ,
\end{equation}
where $\mathcal{D}_p$ is the bosonic dephasing channel given by~\eqref{eq:bdc}, satisfies 
\begin{equation}
p'_d \underset{d\to\infty}{\longrightarrow} p
\label{eq:uniform_convergence_p_prime}
\end{equation}
uniformly on $[-\pi,\pi]$, and furthermore
\begin{equation}
\lim_{d\to\infty} \int_{-\pi}^\pi d\phi\, \left\vert p(\phi) - p'_d(\phi)\right\vert = 0 .
\label{eq:L1_convergence_p_prime}
\end{equation}
\end{lemma}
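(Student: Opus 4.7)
The plan is to rewrite $p'_d$ as the integral of $p$ against an approximate-identity kernel on the circle and then invoke a standard smoothing argument.

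First, I would use \eqref{eq:photon-num-fourier-meas-BDC-other} to express
\begin{equation}
p'_d(\hat\phi) = \int_{-\pi}^{\pi} d\phi\, K_d(\hat\phi,\phi)\, p(\phi), \qquad K_d(\hat\phi,\phi) \coloneqq \sum_{k=0}^{d-1} q_d(\hat\phi|k)\, p_d(k|\phi).
\end{equation}
The change of variable $y = \phi - 2\pi k/d$ inside \eqref{eq:phot-num-fourier-dist} identifies $p_d(k|\phi) = d^{-1} F_d(\phi - 2\pi k/d)$, where $F_d(x) \coloneqq \sin^2(dx/2)/[d\sin^2(x/2)]$ is the Fej\'er kernel of order $d$. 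This identification allows reuse of the classical properties of $F_d$, in particular $\int_{-\pi}^\pi F_d(x)\, dx = 2\pi$ and the off-diagonal bound $F_d(x) \leq [d\sin^2(x/2)]^{-1}$.

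Next, I would verify the two defining properties of an approximate identity on the circle (with $|\cdot|_{2\pi}$ the arc distance). \emph{Normalization}: $\int_{-\pi}^{\pi} K_d(\hat\phi,\phi)\, d\phi = 1$ for every $\hat\phi$, because $\int_{-\pi}^{\pi} p_d(k|\phi)\, d\phi = 2\pi/d$ by the normalization of $F_d$ together with $2\pi$-periodicity, while for each $\hat\phi$ exactly one index $k(\hat\phi)$ lies in the support of $q_d(\hat\phi|\cdot)$, where $q_d$ equals $d/(2\pi)$. \emph{Concentration}: for each $\delta\in(0,\pi)$,
\begin{equation}
\sup_{\hat\phi\in[-\pi,\pi]} \int_{|\phi - \hat\phi|_{2\pi}\geq \delta} K_d(\hat\phi,\phi)\, d\phi \to 0 \quad \text{as } d\to\infty.
\end{equation}
For fixed $\hat\phi$ only the term $k=k(\hat\phi)$ contributes, and $|2\pi k(\hat\phi)/d - \hat\phi|_{2\pi} < 2\pi/d$, so for large $d$, $|\phi-\hat\phi|_{2\pi}\geq\delta$ forces $|\phi - 2\pi k(\hat\phi)/d|_{2\pi} \geq \delta/2$; the off-diagonal Fej\'er bound together with $\sin(x/2) \geq |x|_{2\pi}/\pi$ then gives an integral estimate of order $1/(d\delta)$.

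Given these properties, \eqref{eq:uniform_convergence_p_prime} follows by a textbook argument. Since $p(-\pi)=p(\pi)$, the $2\pi$-periodic extension of $p$ is continuous on $\mathbb{R}$ and hence uniformly continuous. Writing
\begin{equation}
p'_d(\hat\phi) - p(\hat\phi) = \int_{-\pi}^{\pi} K_d(\hat\phi,\phi)\, [p(\phi) - p(\hat\phi)]\, d\phi,
\end{equation}
splitting the integral at $|\phi-\hat\phi|_{2\pi} = \delta$, and bounding the inner region by the modulus of continuity of $p$ and the outer region by $2\|p\|_\infty$ times the vanishing mass above yields uniform convergence; the $L^1$ statement \eqref{eq:L1_convergence_p_prime} then follows from uniform convergence on the bounded interval $[-\pi,\pi]$. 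The main technical hurdle is the concentration step: because $K_d$ is not translation-invariant in $(\hat\phi,\phi)$, one must track which single index $k(\hat\phi)$ actually contributes and align the Fej\'er peak of $p_d(k(\hat\phi)|\phi)$ with $\hat\phi$, absorbing the residual $|2\pi k(\hat\phi)/d - \hat\phi|_{2\pi}<2\pi/d$ into the asymptotics while handling the $\mathrm{mod}\, 2\pi$ prescription near the endpoints uniformly in $\hat\phi$.
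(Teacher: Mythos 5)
Your proposal is correct, and it reaches the conclusion by a route that shares the paper's algebraic core but closes the analysis differently. Both arguments rest on the same two observations: that $p_d(k|\phi)$ equals $d^{-1}F_d(2\pi k/d-\phi)$ with $F_d$ the Fej\'er kernel, and that for each $\hat\phi$ exactly one index $k(\hat\phi)$ survives in the sum over $k$, with $\left|2\pi k(\hat\phi)/d-\hat\phi\right|_{2\pi}<2\pi/d$. The paper then writes $p'_d(\hat\phi)=(p\star F_d)\bigl(\tfrac{2\pi}{d}\lfloor d\hat\phi/(2\pi)\rfloor\bigr)$, shows that $p\star F_d$ inherits the modulus of continuity of the periodic extension $\tilde p$ (so that evaluating at the discretized point rather than at $\hat\phi$ costs only $\omega(2\pi/d)$), and outsources the remaining uniform convergence $p\star F_d\to p$ to Fej\'er's theorem. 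You instead verify directly that $K_d(\hat\phi,\phi)=\sum_k q_d(\hat\phi|k)\,p_d(k|\phi)$ is a (non-translation-invariant) approximate identity on the circle --- normalization from $\int_{-\pi}^{\pi}F_d=2\pi$ together with the single surviving $k$, and concentration from the off-diagonal bound $F_d(x)\leq[d\sin^2(x/2)]^{-1}$ --- and then run the standard split-at-$\delta$ argument against the modulus of continuity of $\tilde p$. Your version is more self-contained, effectively reproving the piece of Fej\'er's theorem that is needed, and it absorbs the $O(2\pi/d)$ misalignment between the kernel's peak and $\hat\phi$ into the concentration estimate rather than into a separate equicontinuity step; the paper's version is shorter because the hard analytic content is delegated to a named classical theorem. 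Both treat the mod-$2\pi$ endpoint issue identically via the periodic extension (which is where the hypothesis $p(-\pi)=p(\pi)$ enters), and both deduce the $L^1$ statement trivially from uniform convergence on a bounded interval.
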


\begin{proof}
We start by observing that, due to the calculation in the first part of this appendix,
\begin{align}
p'_d(\hat{\phi}) &\overset{\text{(i)}}{=} \frac1d \sum_{k=0}^{d-1} \Pi_d\!\left( \hat{\phi} - \frac{2\pi k}{d} \,\,\mathrm{mod}\,\, 2\pi \right) \notag \\
&\hspace{6ex} \times \int_{-\pi}^{\pi} d\phi\ p(\phi) \,F_d\!\left(\frac{2\pi k}{d} - \phi\right) \\
&\overset{\text{(ii)}}{=} \int_{-\pi}^{\pi} \frac{d\phi}{2\pi}\ p(\phi) \,F_d\!\left(\frac{2\pi}{d}\! \floor{\frac{d\hat{\phi}'}{2\pi}} - \phi\right) \\
&\overset{\text{(iii)}}{=} \int_{-\pi}^{\pi} \frac{d\phi}{2\pi}\ p(\phi) \,F_d\!\left(\frac{2\pi}{d}\! \floor{\frac{d\hat{\phi}}{2\pi}} - \phi\right) \\
&\overset{\text{(iv)}}{=} (p\star F_d)\!\left(\frac{2\pi}{d}\! \floor{\frac{d\hat{\phi}}{2\pi}}\right)
\end{align}
Here, in~(i) we introduced the Fej\'er kernel $F_d(x)\coloneqq \frac{\sin^2(dx/2)}{d \sin^2(x/2)}$, in~(ii) we observed that the only nonzero term in the sum is for $k = \floor{\frac{d\hat{\phi}'}{2\pi}}$, where $\hat{\phi}' \coloneqq \hat{\phi} \,\,\mathrm{mod}\,\, 2\pi$; indeed, since changing $k\mapsto k+1$ displaces the point $\hat{\phi} - \frac{2\pi k}{d}$ by exactly $-2\pi/d$, and the function $\Pi_d$ is nonzero in an interval of length precisely equal to $2\pi/d$, there can be only one nonzero term in the sum; using that $\hat{\phi}' = \hat{\phi} + 2\pi p$, where $p\in \{0,1\}$, we can also verify that
\begin{equation}
\hat{\phi} - \frac{2\pi}{d} \floor{\frac{d\hat{\phi}'}{2\pi}} = \hat{\phi} - \frac{2\pi}{d} \floor{\frac{d\hat{\phi}}{2\pi}} - 2\pi p .
\end{equation}
Using $x-1<\floor{x}\leq x$, we now note that
\begin{equation}
0\leq \hat{\phi} - \frac{2\pi}{d} \floor{\frac{d\hat{\phi}}{2\pi}} < \hat{\phi} - \frac{2\pi}{d} \left( \frac{d\hat{\phi}}{2\pi} - 1 \right) = \frac{2\pi}{d} ,
\end{equation}
implying that indeed
\begin{align}
&\Pi_d\!\left(\hat{\phi} - \frac{2\pi}{d} \floor{\frac{d\bar{\hat{\phi}}}{2\pi}} \,\,\mathrm{mod}\,\,2\pi\right) \notag \\
&\qquad = \Pi_d\left(\hat{\phi} - \frac{2\pi}{d} \floor{\frac{d\hat{\phi}}{2\pi}}\right)\\
&\qquad = \frac{d}{2\pi} ;
\end{align}
continuing with the justification of the first chain of identities, in~(iii) we used the periodicity of $F_d$ to substitute~$\hat{\phi}'$ with $\hat{\phi}$, and finally in~(iv) we introduced the notation 
\begin{equation}
(p\star F_d)(\xi)\coloneqq \int_{-\pi}^{\pi} \frac{d\phi}{2\pi}\ p(\phi) \,F_d(\xi - \phi).
\end{equation}
Now, 
calling $\tilde{p}$ the periodic extension of $p$ to the whole real line, for all $\xi\in \mathbb{R}$ one sees that
\begin{align}
(p\star F_d)(\xi) &= \int_{-\pi+\xi}^{\pi+\xi} \frac{d\theta}{2\pi}\ p(\xi - \theta) \,F_d\!\left(\theta\right) \\
&= \int_{-\pi}^{\pi} \frac{d\theta}{2\pi}\ \tilde{p}(\xi - \theta) \,F_d\!\left(\theta\right) .
\end{align}
Note that since $p$ is continuous on the compact set $[-\pi,\pi]$, it is also uniformly continuous. Due to the fact that $p(-\pi) = p(\pi)$, its extension $\tilde{p}$ can also be shown to be uniformly continuous. Let $\omega$ be the modulus of continuity of $\tilde{p}$. This means that $\omega:[0,\infty) \to [0,\infty)$ is a non-decreasing continuous function, with $\omega(0)=0$, such that for all $\xi, \xi'\in \mathbb{R}$ it holds that
\begin{equation}
\left| \tilde{p}(\xi) - \tilde{p}(\xi')\right| \leq \omega(|\xi-\xi'|) .
\end{equation}
Now, for $\xi,\xi'\in \mathbb{R}$ we can write that
\begin{align}
&\left| (p\star F_d)(\xi) - (p\star F_d)(\xi') \right| \notag \\
&\qquad = \left| \int_{-\pi}^{\pi} \frac{d\theta}{2\pi}\ \big(\tilde{p}(\xi - \theta) - \tilde{p}(\xi' - \theta)\big) \,F_d(\theta) \right| \\
&\qquad \leq \int_{-\pi}^{\pi} \frac{d\theta}{2\pi}\ \big|\tilde{p}(\xi - \theta) - \tilde{p}(\xi' - \theta)\big| \,F_d(\theta) \\
&\qquad \leq \omega\big(|\xi-\xi'|\big) \int_{-\pi}^{\pi} \frac{d\theta}{2\pi}\ F_d(\theta) \\
&\qquad = \omega\big(|\xi-\xi'|\big) ,
\end{align}
where in the last line we leveraged the fact that $\int_{-\pi}^{\pi} \frac{d\theta}{2\pi}\ F_d(\theta) = 1$ for all $d$. In other words, also $p\star F_d$ is uniformly continuous, and it has the same modulus of continuity as $\tilde{p}$.

We are finally ready to put everything together and prove the first half of the claim. We write that
\begin{align}
&\left| p'_d(\hat{\phi}) - p(\hat{\phi})\right| \notag \\
&\quad = \left| (p\star F_d)\!\left(\frac{2\pi}{d}\! \floor{\frac{d\hat{\phi}}{2\pi}}\right) - p(\hat{\phi})\right| \\
&\quad \leq \left| (p\star F_d)\!\left(\frac{2\pi}{d}\! \floor{\frac{d\hat{\phi}}{2\pi}}\right) - (p\star F_d)\!\left(\hat{\phi}\right)\right| \notag \\
&\quad \qquad + \left| (p\star F_d)\!\left(\hat{\phi}\right) - p(\hat{\phi})\right| \\
&\quad \leq \omega\!\left( \left| \frac{2\pi}{d}\! \floor{\frac{d\hat{\phi}}{2\pi}} - \hat{\phi} \right| \right) + \left| (p\star F_d)\!\left(\hat{\phi}\right) - p(\hat{\phi})\right| \\
&\quad \leq \omega(2\pi/d) + \left| (p\star F_d)\!\left(\hat{\phi}\right) - p(\hat{\phi})\right| ,
\end{align}
where in the last line we noted that $\hat{\phi}\leq \frac{2\pi}{d}\! \floor{\frac{d\hat{\phi}}{2\pi}} \leq \hat{\phi} + \frac{2\pi}{d}$, because of the elementary properties of the floor function. Now, since $\lim_{d\to\infty} \omega(2\pi/d) = 0$, to establish~\eqref{eq:uniform_convergence_p_prime} we only need to check that $p\star F_d$ converges uniformly to $p$ as $d\to\infty$; and this is well known to follow from the continuity of $p$, due to Fej\'er's theorem~\cite[Theorem~3.4]{ZYGMUND}.

To deduce~\eqref{eq:L1_convergence_p_prime} from~\eqref{eq:uniform_convergence_p_prime} it suffices to note that 
\begin{equation}
\int_{-\pi}^\pi d\phi\, \left\vert p(\phi) - p'_d(\phi)\right\vert \leq 2\pi \sup_\phi \left\vert p(\phi) - p'_d(\phi)\right\vert ,
\end{equation}
and the right-hand side tends to $0$ as $d\to\infty$ due to~\eqref{eq:uniform_convergence_p_prime}.
\end{proof}

\section{Calculations for coherent-state  method}

\label{app:coh-state-method}

Let us first justify the equality in~\eqref{eq:coh_rot}. 
After the phase rotation $e^{- i\hat{n}\phi}$ acts, the state becomes
\begin{align}
e^{- i\hat{n}\phi}|\alpha\rangle &  =e^{-\frac{1}{2}\left\vert
\alpha\right\vert ^{2}}\sum_{n=0}^{\infty}\frac{\alpha^{n}}{\sqrt{n!}}e^{-
i\hat{n}\phi}|n\rangle\\
&  =e^{-\frac{1}{2}\left\vert \alpha\right\vert ^{2}}\sum_{n=0}^{\infty}%
\frac{\alpha^{n}}{\sqrt{n!}}e^{- in\phi}|n\rangle\\
&  =e^{-\frac{1}{2}\left\vert \alpha\right\vert ^{2}}\sum_{n=0}^{\infty}%
\frac{\left(  \alpha e^{- i\phi}\right)  ^{n}}{\sqrt{n!}}|n\rangle\\
&  =|\alpha e^{- i\phi}\rangle.
\end{align}

After performing heterodyne detection, and as discussed in the main text, we compute the argument of~$\beta$ as the
estimate of $\phi$, i.e.,
$
\hat{\phi}\coloneqq \operatorname{arg}(\beta)$. 
The induced probability density function for $\hat{\phi}$ is known as
the Rician phase distribution (see~\cite[Eqs.~(10) \& (20)]{LZJ20}). In particular,
we can model the random process by which $\hat{\phi}$ is generated as
being like that in~\cite[Eq.~(3)]{LZJ20}, given by
\begin{align}
\beta=\alpha\exp(- i\phi)+n,
\end{align}
where $n$ is a complex Gaussian random variable $\mathcal{CN}(0,1)$ (such that
the variance for each of the real and imaginary parts is 1/2, i.e.,
$\sigma^{2}=1/2$, using the notation of~\cite[Eq.~(3)]{LZJ20}). We can restrict
$\alpha$ to be a positive real number, and in this case, we have that
$A=\alpha$ and $B=-1$, using the notation of~\cite[Eq.~(3)]{LZJ20}. Following
\cite[Eqs.~(10) \& (20)]{LZJ20}, we find that the probability density $p_{\alpha}(\hat{\phi}|\phi)$ for $\hat{\phi} \in [-\pi,\pi]$ is given by
\begin{multline}
p_{\alpha}(\hat{\phi}|\phi)\coloneqq \frac{e^{-\left\vert \alpha\right\vert ^{2}%
}}{2\pi}
+\frac{1}{2}\frac{\left\vert \alpha\right\vert }{\sqrt{\pi}}\cos
(\hat{\phi}-\phi )e^{-\left\vert \alpha\right\vert ^{2}\sin
^{2}(\hat{\phi}-\phi)}\times \\
\left[  1+\operatorname{erf}(\left\vert
\alpha\right\vert \cos(\hat{\phi}-\phi ))\right]  .
\end{multline}
We now show that this probability density converges to a Dirac delta at $\phi$ in the limit as $\alpha\rightarrow\infty$, in the sense stated in~\eqref{eq:coh-state-sch-converge}.

\begin{lemma} \label{coherent_state_method_convergence_lemma}
Let $p:[-\pi,\pi]\to \mathbb{R}_+$ be a continuous non-negative function with $p(-\pi)=p(\pi)$ and $\int_{-\pi}^\pi d\phi\ p(\phi) = 1$. For all $\alpha\geq 0$, let $p_\alpha(\hat{\phi}|\phi)$ be the Rician probability distribution defined by~\eqref{eq:Rician}, and set
\begin{equation}
p'_\alpha(\hat{\phi}) \coloneqq \int_{-\pi}^\pi d\phi\ p(\phi)\, p_\alpha(\hat{\phi}|\phi) .
\label{eq:conditional-rician}
\end{equation}
Then
\begin{equation}
p'_\alpha \underset{\alpha\to\infty}{\longrightarrow} p
\label{eq:pointwise_convergence_coherent_state_method}
\end{equation}
pointwise on $[-\pi,\pi]$, and furthermore
\begin{equation}
\lim_{\alpha\to\infty} \int_{-\pi}^\pi d\phi\, \left\vert p(\phi) - p'_\alpha(\phi)\right\vert = 0 .
\label{eq:L1_convergence_coherent_state_method}
\end{equation}
\end{lemma}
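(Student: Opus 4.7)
The plan is to recognize the Rician density as a convolution kernel on the circle and show that it constitutes a family of \emph{good kernels} (approximate identities) as $\alpha\to\infty$. Once this is in hand, the convergence of $p'_\alpha$ to $p$ follows from a textbook estimate, in close analogy with the Fej\'er-kernel argument used to prove Lemma~\ref{pns_convergence_lemma}. The first observation I use is that the explicit form in~\eqref{eq:Rician} depends only on the difference $\hat\phi-\phi$. Setting
\[
K_\alpha(\theta) \coloneqq p_\alpha(\theta|0) = \frac{e^{-\alpha^2}}{2\pi} + \frac{\alpha}{2\sqrt{\pi}}\cos\theta\, e^{-\alpha^2\sin^2\theta}\!\left[1+\operatorname{erf}(\alpha\cos\theta)\right]
\]
(using $\alpha\in\mathbb{R}_+$ so that $|\alpha|=\alpha$), and letting $\tilde p$ be the $2\pi$-periodic extension of $p$ (which is uniformly continuous on $\mathbb{R}$ precisely because $p(-\pi)=p(\pi)$, with modulus of continuity $\omega$), the quantity in~\eqref{eq:conditional-rician} becomes the circular convolution $p'_\alpha(\hat\phi)=\int_{-\pi}^{\pi} d\theta\,\tilde p(\hat\phi-\theta)K_\alpha(\theta)$. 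Non-negativity and $\int_{-\pi}^{\pi}K_\alpha=1$ are inherited from $p_\alpha(\cdot|0)$ being a probability density on $[-\pi,\pi]$.

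The heart of the proof is to establish the concentration property: for every $\delta\in(0,\pi)$,
\[
\int_{\delta\le|\theta|\le\pi}K_\alpha(\theta)\,d\theta \underset{\alpha\to\infty}{\longrightarrow}0.
\]
The vacuum piece contributes only $O(e^{-\alpha^2})$ and can be discarded. For the main term, I split the annulus $\delta\le|\theta|\le\pi$ into two subregions. On $\delta\le|\theta|\le\pi-\delta$ one has $\sin^2\theta\ge\sin^2\delta>0$, so the factor $e^{-\alpha^2\sin^2\theta}\le e^{-\alpha^2\sin^2\delta}$ produces Gaussian decay in $\alpha$ uniformly in $\theta$. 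On $\pi-\delta\le|\theta|\le\pi$, $\sin^2\theta$ is small, but $\cos\theta\le -\cos\delta<0$, so $1+\operatorname{erf}(\alpha\cos\theta)\le 1-\operatorname{erf}(\alpha\cos\delta)$, which is $O(\alpha^{-1}e^{-\alpha^2\cos^2\delta})$ by the standard Mills-ratio asymptotic $1-\operatorname{erf}(x)\sim e^{-x^2}/(x\sqrt{\pi})$. In both subregions the integrand is at most polynomial in $\alpha$ times a factor decaying faster than any polynomial, so the integral vanishes. This dichotomy is the main obstacle I expect to encounter: without the erf factor the kernel would fail to be localized, because $\sin^2\theta$ itself vanishes at the antipodal points $\theta=\pm\pi$, and it is precisely the $1+\operatorname{erf}(\alpha\cos\theta)$ factor that suppresses the contribution from that region.

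With $K_\alpha$ shown to be a good kernel, the convergence is routine. Using normalization of $K_\alpha$ and uniform continuity of $\tilde p$, I bound
\[
|p'_\alpha(\hat\phi)-p(\hat\phi)| \le \int_{-\pi}^{\pi}d\theta\, K_\alpha(\theta)\,|\tilde p(\hat\phi-\theta)-\tilde p(\hat\phi)|,
\]
and split the $\theta$-integral at $|\theta|=\delta$. The inner contribution is at most $\omega(\delta)$, and the outer one is at most $2\|p\|_\infty\int_{|\theta|>\delta}K_\alpha(\theta)\,d\theta$. Choosing $\delta$ first so that $\omega(\delta)$ is as small as desired and then $\alpha$ large enough for the tail to be negligible (by the concentration step) yields uniform convergence $p'_\alpha\to p$ on $[-\pi,\pi]$, which immediately gives the pointwise statement~\eqref{eq:pointwise_convergence_coherent_state_method} and also~\eqref{eq:L1_convergence_coherent_state_method} via the trivial bound $\int_{-\pi}^{\pi}|p-p'_\alpha|\le 2\pi\,\|p-p'_\alpha\|_\infty$.
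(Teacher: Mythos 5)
Your proof is correct, but it takes a genuinely different route from the paper's. The paper never touches the closed-form expression~\eqref{eq:Rician}: it instead starts from the integral representation $p_\alpha(\hat{\phi}|\phi) = \int_0^\infty db\, \frac{b}{\pi} e^{-|\alpha - b e^{-i(\hat{\phi}-\phi)}|^2}$, applies Fubini to rewrite $p'_\alpha(\hat{\phi})$ as $\int_{\mathbb{C}} \frac{d^2 z}{\pi}\, \tilde{p}\big(\hat{\phi} + \mathrm{arg}(z+\alpha)\big) e^{-|z|^2}$, and then gets pointwise convergence from $\mathrm{arg}(z+\alpha)\to 0$ together with Lebesgue dominated convergence, with the $L^1$ statement following from a second application of dominated convergence. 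That argument buys a very short proof that sidesteps any asymptotics of the error function, but it only delivers pointwise convergence. Your approximate-identity argument works harder at the tail estimate --- and the two-region split is exactly the right dichotomy, since $e^{-\alpha^2\sin^2\theta}$ alone fails to localize near $\theta=\pm\pi$ and it is the factor $1+\operatorname{erf}(\alpha\cos\theta)$ (or, even more simply, the nonnegativity of $K_\alpha$, which forces $K_\alpha(\theta)\le e^{-\alpha^2}/(2\pi)$ wherever the second term is negative) that kills that region --- but in exchange it yields the strictly stronger conclusion of \emph{uniform} convergence on $[-\pi,\pi]$, mirroring the Fej\'er-kernel proof of Lemma~\ref{pns_convergence_lemma} and making the two attainability schemes structurally parallel. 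Both proofs are complete; yours proves more than the lemma asserts.
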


\begin{proof}
For this proof it is  ideal to work with the integral representation of the Rician probability density given in~\eqref{eq:integral-rep-rician}; namely,
\begin{equation}
p_\alpha(\hat{\phi}|\phi) = \int_0^\infty db\ \frac{b}{\pi}\, e^{-|\alpha - b e^{-i(\hat{\phi}-\phi)}|^2} .
\label{eq:integral-rep-rician-other}
\end{equation}
Substituting \eqref{eq:integral-rep-rician-other} into \eqref{eq:conditional-rician}, we now have that
\begin{align}
p'_\alpha(\hat{\phi}) &\overset{\text{(i)}}{=} \int_{-\pi}^{\pi} d\phi\ \tilde{p}(\phi)\int_0^\infty db\ \frac{b}{\pi}\, e^{-|\alpha - b e^{-i(\hat{\phi}-\phi)}|^2} \\
&\overset{\text{(ii)}}{=} \int_{\mathbb{C}} \frac{d^2\gamma}{\pi}\ \tilde{p}\big(\hat{\phi} + \mathrm{arg}(\gamma)\big)\, e^{-|\alpha - \gamma|^2} \\
&\overset{\text{(iii)}}{=} \int_{\mathbb{C}} \frac{d^2 z}{\pi}\ \tilde{p}\big(\hat{\phi} + \mathrm{arg}(z+\alpha)\big)\, e^{-|z|^2}. 
\end{align}
The justification of the above chain of identities is as follows: in~(i) we used the above integral representation of the Rician probability density and introduced the periodic extension $\tilde{p}$ of $p$ to the whole real line; in~(ii) we used Fubini's theorem, changing variables to $\gamma \coloneqq b e^{-i(\hat{\phi}-\phi)}$; in~(iii) we changed variables again, setting $z\coloneqq \gamma - \alpha$.

Now, since $\alpha\geq 0$ is real, we have that
\begin{equation}
    \mathrm{arg}(z+\alpha) = \arctan\!\left(\frac{z_I}{\alpha + z_R}\right)
\end{equation}
up to multiples of $2\pi$, and hence 
\begin{equation}
    \lim_{\alpha\to\infty} \tilde{p}\big(\hat{\phi} + \mathrm{arg}(z+\alpha)\big) = \tilde{p}(\hat{\phi})
\end{equation}
due to the continuity of $\tilde{p}$ and the $\arctan$ function. We can then write 
\begin{align}
&\lim_{\alpha\to\infty} \int_{\mathbb{C}} \frac{d^2 z}{\pi}\ \tilde{p}\big(\hat{\phi} + \mathrm{arg}(z+\alpha)\big)\, e^{-|z|^2} \notag \\
&\qquad \hspace{-2.5pt}\overset{\text{(iv)}}{=} \int_{\mathbb{C}} \frac{d^2 z}{\pi}\ \lim_{\alpha\to\infty} \tilde{p}\big(\hat{\phi} + \mathrm{arg}(z+\alpha)\big)\, e^{-|z|^2} \\
&\qquad = \int_{\mathbb{C}} \frac{d^2 z}{\pi}\ \tilde{p}(\hat{\phi})\, e^{-|z|^2} \\
&\qquad = \tilde{p}(\hat{\phi}) = p(\hat{\phi}) ,
\end{align}
where in~(iv) we employed Lebesgue's dominated convergence theorem, which is applicable because, due to its periodicity and continuity, $\tilde{p}$ is a bounded function, which means that $\tilde{p}\big(\hat{\phi} + \mathrm{arg}(z+\alpha)\big)\, e^{-|z|^2}\leq M\, e^{-|z|^2}$ for some constant $M > 0$, and the right-hand side is an absolutely integrable function of $z$. This completes the proof of~\eqref{eq:pointwise_convergence_coherent_state_method}.

To deduce~\eqref{eq:L1_convergence_coherent_state_method}, we first note that if $\tilde{p}(\xi)\leq M$ for all $\xi\in \mathbb{R}$, then also
\begin{align}
p'_\alpha(\phi) &= \int_{\mathbb{C}} \frac{d^2 z}{\pi}\ \tilde{p}\big(\phi + \mathrm{arg}(z+\alpha)\big)\, e^{-|z|^2} \\
&\leq M \int_{\mathbb{C}} \frac{d^2 z}{\pi}\ e^{-|z|^2} = M ,
\end{align}
implying that $|p'_\alpha(\phi) - p(\phi)|\leq M$ for all $\phi \in [-\pi,\pi]$ and again by Lebesgue's dominated convergence (since we integrate on a finite-measure space)
\begin{align}
&\lim_{\alpha\to\infty} \int_{-\pi}^\pi d\phi\, \left\vert p(\phi) - p'_\alpha(\phi)\right\vert \notag \\
&\qquad = \int_{-\pi}^\pi d\phi\, \lim_{\alpha\to\infty} \left\vert p(\phi) - p'_\alpha(\phi)\right\vert \\
&\qquad = 0 .
\end{align}
This establishes~\eqref{eq:L1_convergence_coherent_state_method} and thereby concludes the proof.
\end{proof}

\section{Derivation of Rician phase probability density function}

\label{app:rician-phase-deriv}

Here, for completeness, we provide a derivation of the Rician phase probability density function.
Consider that the probability density for obtaining the outcome $\beta
\in\mathbb{C}$ when performing heterodyne detection on a coherent state $|\alpha\rangle$, where 
$\alpha\in\mathbb{C}$, is as follows \cite[Eqs.~(4.7) \& (5.122)]{serafini2017quantum}:%
\begin{equation}
p(\beta|\alpha)=\frac{1}{\pi}e^{-\left\vert \alpha-\beta\right\vert ^{2}}.
\end{equation}
Letting $\alpha=re^{-i\phi}$ and $\beta=be^{-i\hat{\phi}}$, with $r,b\geq0$
and $\phi,\hat{\phi}\in\left[  -\pi,\pi\right]  $, we find that
\begin{align}
& p(\beta|\alpha)\ d^{2}\beta \notag \\
& =\frac{1}{\pi}e^{-\left\vert \alpha
-\beta\right\vert ^{2}}\ d^{2}\beta\\
& =\frac{1}{\pi}\exp\!\left(  -\left\vert re^{-i\phi}-be^{-i\hat{\phi}%
}\right\vert ^{2}\right)  \ b\ db\ d\hat{\phi}.
\end{align}
Then we obtain the marginal probability density for the phase $\hat{\phi}$ by
integrating over the magnitude $b$:%
\begin{align}
p(\hat{\phi}|r,\phi)  & =\int_{0}^{\infty}db\ \frac{b}{\pi}\exp\!\left(
-\left\vert re^{-i\phi}-be^{-i\hat{\phi}}\right\vert ^{2}\right)  \\
& =\int_{0}^{\infty}db\ \frac{b}{\pi}\exp\!\left(  -\left\vert r-be^{-i\left(
\hat{\phi}-\phi\right)  }\right\vert ^{2}\right)  .
\label{eq:integral-rep-rician}
\end{align}
Considering that%
\begin{align}
& \left\vert r-be^{-i\left(  \hat{\phi}-\phi\right)  }\right\vert ^{2}  \notag \\ 
&
=r^{2}-2rb\cos(\hat{\phi}-\phi)+b^{2}\\
& =\left(  r\sin(\hat{\phi}-\phi)\right)  ^{2}+\left(  b-r\cos(\hat{\phi}%
-\phi)\right)  ^{2},
\end{align}
we find that%
\begin{align}
& p(\hat{\phi}|r,\phi)\nonumber\\
& =\frac{e^{-\left(  r\sin(\hat{\phi}-\phi)\right)  ^{2}}}{\pi}\int
_{0}^{\infty}db\ b\ e^{-\left(  b-r\cos(\hat{\phi}-\phi)\right)  ^{2}}\\
& =\frac{e^{-\left(  r\sin(\hat{\phi}-\phi)\right)  ^{2}}}{\pi}\int
_{-r\cos(\hat{\phi}-\phi)}^{\infty}d\bar{b}\ \left(  \bar{b}+r\cos(\hat{\phi
}-\phi)\right)  \ e^{-\bar{b}^{2}}.
\end{align}
Now consider that%
\begin{align}
& \int_{-r\cos(\hat{\phi}-\phi)}^{\infty}d\bar{b}\ \left(  \bar{b}+r\cos
(\hat{\phi}-\phi)\right)  \ e^{-\bar{b}^{2}}\nonumber\\
& =\int_{-r\cos(\hat{\phi}-\phi)}^{\infty}d\bar{b}\ \bar{b}\ e^{-\bar{b}^{2}%
}\nonumber\\
& \qquad+r\cos(\hat{\phi}-\phi)\int_{-r\cos(\hat{\phi}-\phi)}^{\infty}d\bar
{b}\ e^{-\bar{b}^{2}}\\
& =-\frac{1}{2}\int_{-r\cos(\hat{\phi}-\phi)}^{\infty}\frac{d}{d\bar{b}%
}\ e^{-\bar{b}^{2}}\nonumber\\
& \qquad+r\cos(\hat{\phi}-\phi)\frac{\sqrt{\pi}}{2}\left(
1+\operatorname{erf}(r\cos(\hat{\phi}-\phi))\right)  \\
& =\frac{1}{2}e^{-\left(  r\cos(\hat{\phi}-\phi)\right)  ^{2}}\nonumber\\
& \qquad+r\cos(\hat{\phi}-\phi)\frac{\sqrt{\pi}}{2}\left(
1+\operatorname{erf}(r\cos(\hat{\phi}-\phi))\right)  .
\end{align}
In the above, we made use of the error function
\begin{equation}
    \operatorname{erf}(x) \coloneqq \frac{2}{\sqrt{\pi}} \int_0^x dt \, e^{-t^2}  ,
\end{equation}
and some of its properties: $\operatorname{erf}(+\infty)=1$ and $\operatorname{erf}(x) = -\operatorname{erf}(-x)$.
Thus, we finally conclude that
\begin{align}
& p(\hat{\phi}|r,\phi)
 =\frac{e^{-\left(  r\sin(\hat{\phi}-\phi)\right)  ^{2}}}{2\pi} \Bigg(
e^{-\left(  r\cos(\hat{\phi}-\phi)\right)  ^{2}} \notag 
\\
& \qquad +r\cos(\hat{\phi}-\phi)\sqrt{\pi}\left(  1+\operatorname{erf}(r\cos(\hat{\phi
}-\phi))\right)
\Bigg)  \\
& =\frac{e^{-r^{2}}}{2\pi}+ \notag \\
& \quad \frac{e^{-\left(  r\sin(\hat{\phi}-\phi)\right)
^{2}}}{2\sqrt{\pi}}r\cos(\hat{\phi}-\phi)\left(  1+\operatorname{erf}%
(r\cos(\hat{\phi}-\phi))\right)  .
\end{align}

\section{Other scenarios: Strong converse exponent, error exponent, multiple channel discrimination, and antidistinguishability}

\label{app:other-scens}

In this appendix, we discuss various other scenarios to which our results
apply. The first two are known as the strong converse exponent and error
exponent, which also go by the names Han--Kobayashi~\cite{HK89} and Hoeffding~\cite{Hoeffding65},
respectively. These are settings related to binary hypothesis testing. The
other two scenarios are multiple channel discrimination and antidistinguishability.

\subsection{Strong converse exponent}

The non-asymptotic strong converse exponent for channel discrimination is
defined for $r>0$ as follows:%
\begin{multline}
H_{n}(r,\mathcal{N}_{0},\mathcal{N}_{1})\coloneqq \\
\inf_{\mathcal{A}}\left\{  -\frac
{1}{n}\ln\left(  1-\alpha_{n}(\mathcal{A})\right)  :\beta_{n}(\mathcal{A})\leq
e^{-rn}\right\}  ,
\end{multline}
where $\alpha_{n}(\mathcal{A})$ and $\beta_{n}(\mathcal{A})$ are defined in
\eqref{eq:ch-disc-err-probs}. By applying the same reasoning as given in Sections~\ref{sec:optimality} and~\ref{sec:attainability}, we conclude
for BDCs $\mathcal{D}_{p}$ and $\mathcal{D}_{q}$ that%
\begin{equation}
H_{n}(r,\mathcal{D}_{p},\mathcal{D}_{q})=\inf_{t}\left\{  -\frac{1}{n}%
\ln\left(  1-\alpha_{n}(t)\right)  :\beta_{n}(t)\leq e^{-rn}\right\}  ,
\end{equation}
where $\alpha_{n}(t)$ and $\beta_{n}(t)$ are defined in~\eqref{eq:cl-disc-err-prob-a}--\eqref{eq:cl-disc-err-prob-b} and taken with
respect to the probability densities $p$ and $q$ defining $\mathcal{D}_{p}$
and $\mathcal{D}_{q}$, respectively. By taking the $n\rightarrow\infty$ limit
and applying the classical result of~\cite{HK89}, we conclude that%
\begin{equation}
\lim_{n\rightarrow\infty}H_{n}(r,\mathcal{D}_{p},\mathcal{D}_{q})=\sup
_{\alpha>1}\frac{\alpha-1}{\alpha}\left(  r-D_{\alpha}(p\Vert q)\right)  ,
\end{equation}
where the R\'enyi relative entropy $D_{\alpha}(p\Vert q)$ is defined for
$\alpha\in\left(  0,1\right)  \cup\left(  1,\infty\right)  $ as%
\begin{equation}
D_{\alpha}(p\Vert q)\coloneqq \frac{1}{\alpha-1}\ln\int_{-\pi}^{\pi}d\phi\ p^{\alpha
}(\phi)\ q^{1-\alpha}(\phi).\label{eq:renyi-rel-ent}%
\end{equation}

\subsection{Error exponent}

The non-asymptotic error exponent for channel discrimination is defined for
$r>0$ as follows:%
\begin{equation}
B_{n}(r,\mathcal{N}_{0},\mathcal{N}_{1})\coloneqq \sup_{\mathcal{A}}\left\{  -\frac
{1}{n}\ln\alpha_{n}(\mathcal{A}):\beta_{n}(\mathcal{A})\leq e^{-rn}\right\}  ,
\end{equation}
where $\alpha_{n}(\mathcal{A})$ and $\beta_{n}(\mathcal{A})$ are defined in
\eqref{eq:ch-disc-err-probs}. By applying the same reasoning as given in Sections~\ref{sec:optimality} and~\ref{sec:attainability}, we conclude
for BDCs $\mathcal{D}_{p}$ and $\mathcal{D}_{q}$ that
\begin{equation}
B_{n}(r,\mathcal{D}_{p},\mathcal{D}_{q})=\sup_{t}\left\{  -\frac{1}{n}%
\ln\alpha_{n}(t):\beta_{n}(t)\leq e^{-rn}\right\}  ,
\end{equation}
where $\alpha_{n}(t)$ and $\beta_{n}(t)$ are defined in~\eqref{eq:cl-disc-err-prob-a}--\eqref{eq:cl-disc-err-prob-b} and taken with
respect to the probability densities $p$ and $q$ defining $\mathcal{D}_{p}$
and $\mathcal{D}_{q}$, respectively. By taking the $n\rightarrow\infty$ limit
and applying the classical result of~\cite{Hoeffding65}, we conclude that
\begin{equation}
\lim_{n\rightarrow\infty}B_{n}(r,\mathcal{D}_{p},\mathcal{D}_{q})=\sup
_{\alpha\in(0,1)}\frac{\alpha-1}{\alpha}\left(  r-D_{\alpha}(p\Vert q)\right)
,
\end{equation}
where the R\'enyi relative entropy $D_{\alpha}(p\Vert q)$ is defined in~\eqref{eq:renyi-rel-ent}.

\subsection{Multiple channel discrimination}

The goal of multiple channel discrimination is to decide which channel has
been chosen from a tuple of channels. More formally, let $\left(
\mathcal{N}_{i}\right)  _{i=1}^{\ell}$ be a tuple of channels. Then an
adaptive protocol for channel discrimination consists of an adaptive strategy
of the form discussed previously in Section~\ref{sec:QCD}, with the only difference
being that the final measurement is $\mathcal{Q}\coloneqq \left(  Q_{i}\right)
_{i=1}^{\ell}$. Letting $\rho_{i}^{(n)}$ be the final state of such a protocol
when the $i$th channel has been selected, the success probability of multiple
channel discrimination is%
\begin{equation}
p_{n}^{s}(\left(  \mathcal{N}_{i}\right)  _{i=1}^{\ell})\coloneqq \sup_{\mathcal{A}%
}\sum_{i=1}^{\ell}\lambda_{i}\operatorname{Tr}\!\left[Q_{i}\rho_{i}^{(n)}\right],
\end{equation}
where $\lambda_{i}$ is the prior probability that channel $\mathcal{N}_{i}$ is
selected. (Thus, the following constraints apply:$\ \lambda_{i}\geq0$ for all
$i\in\left\{  1,\ldots,\ell\right\}  $ and $\sum_{i=1}^{\ell}\lambda_{i}=1$).

Now let us consider classical multiple hypothesis testing. Let $(p_{i}%
)_{i=1}^{\ell}$ be a tuple of probability densities. Here the goal is to
observe a sample $\phi^{n}\equiv\left(  \phi_{1},\ldots,\phi_{n}\right)  $
from one of the product densities (i.e., of the form $p_{i}^{\otimes n}$) and
decide the value of $i$ (i.e., which density generated the sample sequence).
The success probability is given by%
\begin{equation}
p_{n}^{s}((p_{i})_{i=1}^{\ell})\coloneqq \sup_{t}\sum_{i=1}^{\ell}\lambda_{i}\int
d\phi^{n}\ t(i|\phi^{n})\ p_{i}^{\otimes n}(\phi^{n}),
\end{equation}
where $\lambda_{i}$ is a prior probability and $(t(i|\phi^{n}))_{i=1}^{\ell}$
is a conditional probability distribution (i.e., satisfying $t(i|\phi^{n}%
)\geq0$ for all $i\in\left\{  1,\ldots,\ell\right\}  $ and $\sum_{i=1}^{\ell
}t(i|\phi^{n})=1$).

By the same reasoning from Sections~\ref{sec:optimality} and~\ref{sec:attainability}, our main result here is that%
\begin{equation}
p^{s}_n(\left(  \mathcal{D}_{p_{i}}\right)  _{i=1}^{\ell})=p^{s}_n((p_{i}%
)_{i=1}^{\ell}),\label{eq:mult-ch-disc-result}%
\end{equation}
where $\left(  \mathcal{D}_{p_{i}}\right)  _{i=1}^{\ell}$ is a tuple of
bosonic dephasing channels defined by the corresponding tuple $(p_{i}%
)_{i=1}^{\ell}$ of probability densities.
By employing the known result \cite{salikhov1973asymptotic} (see also \cite[Theorem~4.2]{torgersen1981measures} and   \cite{Leang1997,salikhov1999one,salikhov2003optimal}) that the asymptotic error exponent for multiple hypothesis testing is equal to the minimum pairwise Chernoff divergence, we conclude the following:
\begin{equation}
    \lim_{n\to \infty} -\frac{1}{n}\ln \!\left(1-p^{s}_n(\left(  \mathcal{D}_{p_{i}}\right)  _{i=1}^{\ell})\right) = \min_{i \neq j} C(p_i \Vert p_j),
\end{equation}
where the Chernoff divergence $C(p_i \Vert p_j)$ is defined from~\eqref{eq:Chernoff-div}.

\subsection{Antidistinguishability}

The problem of antidistinguishability has the same structure as multiple
channel discrimination, but the goal is the opposite. That is, the goal is to
decide which channel was not selected. That is, if the $i$th channel is
selected, the goal is to report back \textquotedblleft not $i$\textquotedblright.\ We can thus adopt all of the notation from the previous
section, but the error probability for the antidistinguishability problem is
given by%
\begin{equation}
p_{n}^{e}(\left(  \mathcal{N}_{i}\right)  _{i=1}^{\ell})\coloneqq \inf_{\mathcal{A}%
}\sum_{i=1}^{\ell}\lambda_{i}\operatorname{Tr}\!\left[Q_{i}\rho_{i}^{(n)}\right].
\end{equation}
Similarly, for the classical antidistinguishability problem, the error
probability is given by%
\begin{equation}
p_{n}^{e}((p_{i})_{i=1}^{\ell})\coloneqq \inf_{t}\sum_{i=1}^{\ell}\lambda_{i}\int
d\phi^{n}\ t(i|\phi^{n})\ p_{i}^{\otimes n}(\phi^{n}).
\end{equation}
Thus, the main difference with multiple hypothesis testing mathematically is to minimize the objective
functions rather than maximize them. By the same reasoning from Sections~\ref{sec:optimality} and~\ref{sec:attainability},
we conclude that%
\begin{equation}
p_{n}^{e}(\left(  \mathcal{D}_{p_{i}}\right)  _{i=1}^{\ell})=p_{n}^{e}%
((p_{i})_{i=1}^{\ell}),\label{eq:antidist-1-shot}%
\end{equation}
where $\left(  \mathcal{D}_{p_{i}}\right)  _{i=1}^{\ell}$ is a tuple of
bosonic dephasing channels defined by the corresponding tuple $(p_{i}%
)_{i=1}^{\ell}$ of probability densities.

As shown recently in
\cite{mishra2023optimal}, there is a solution for
the asymptotic error exponent of antidistinguishability. Namely, the following limit holds:%
\begin{equation}
\lim_{n\rightarrow\infty}-\frac{1}{n}\ln p_{n}^{e}((p_{i})_{i=1}^{\ell
})=C((p_{i})_{i=1}^{\ell}),\label{eq:asymp-anti}%
\end{equation}
where the multivariate Chernoff divergence $C((p_{i})_{i=1}^{\ell})$ is
defined as%
\begin{equation}
C((p_{i})_{i=1}^{\ell})\coloneqq -\ln\inf_{s}\int d\phi\ \prod\limits_{i=1}^{\ell
}p_{i}^{s_{i}}(\phi),
\end{equation}
with the optimization over $s\coloneqq \left(  s_{i}\right)  _{i=1}^{\ell}$, a
probability vector (satisfying $s_{i}\geq0$ for all $i\in\left\{
1,\ldots,\ell\right\}  $ and $\sum_{i=1}^{\ell}s_{i}=1$). Combining
\eqref{eq:antidist-1-shot} and~\eqref{eq:asymp-anti}, we conclude that%
\begin{equation}
\lim_{n\rightarrow\infty}-\frac{1}{n}\ln p_{n}^{e}(\left(  \mathcal{D}_{p_{i}%
}\right)  _{i=1}^{\ell})=C((p_{i})_{i=1}^{\ell}),
\end{equation}
where $\left(  \mathcal{D}_{p_{i}}\right)  _{i=1}^{\ell}$ is a tuple of
bosonic dephasing channels defined by the corresponding tuple $(p_{i}%
)_{i=1}^{\ell}$ of probability densities. 

\section{Multimode bosonic dephasing channels}

\label{app:multimode}

In this appendix, we briefly argue how all of our results apply to multimode BDCs as well. Recall from~\cite{lami2023exact} that a multimode BDC is defined as
\begin{equation}
    \mathcal{D}_p^{(m)}(\rho) \coloneqq \int_{[-\pi,\pi]^m}  d^m \phi\ p(\phi)\, e^{-i \sum_j \hat{n}_{j}  \phi_j} \rho\, e^{i\sum_j \hat{n}_{j}  \phi_j} ,
\label{eq:multimode-BDC}
\end{equation}
where $\phi \coloneqq (\phi_1, \ldots, \phi_m)$ is a vector of $m$ phases and $p(\phi)$ is a joint probability density function. 

This claim holds because the same arguments used in Section~\ref{sec:optimality} for optimality and in Section~\ref{sec:attainability} for attainability go through. Indeed, the channel $\mathcal{D}_p^{(m)}$ can be decomposed similarly to~\eqref{eq:deph-ch-decompose}, as
\begin{equation}
    \mathcal{D}_p^{(m)} = \mathcal{G}^{(m)} \circ \mathcal{F}^{(m)}_p,
\end{equation}
where
\begin{align}
    \mathcal{F}^{(m)}_p( \rho) & \coloneqq \rho \otimes \sigma_p^{(m)}, \\
    \sigma_p^{(m)} & \coloneqq  \int_{[-\pi,\pi]^m}  d^m \phi\ p(\phi) |\phi\rangle\!\langle \phi |, \\
    |\phi\rangle & \coloneqq |\phi_1 \rangle \otimes \cdots \otimes |\phi_m \rangle,
\end{align}
\begin{multline}
    \mathcal{G}^{(m)}(\rho' \otimes \rho'')  \coloneqq \\
    \int_{[-\pi,\pi]^m}  d^m \phi\ 
    e^{-i \sum_j \hat{n}_{j}  \phi_j} \rho'\, e^{i\sum_j \hat{n}_{j}  \phi_j} \, \operatorname{Tr}[|\phi\rangle\!\langle \phi |\rho''].
\end{multline}
As such, the same simulation argument as before applies, with all distinguishability or estimation tasks limited by the distinguishability or estimability of the classical environmental states of the form $\sigma_p^{(m)}$. The attainability part follows because one can simply employ a tensor product of the strategies considered in Section~\ref{sec:attainability}. That is, for all $i \in \{1,\ldots, m\}$, suppose that $(\rho^i_\nu)_{\nu \in \mathbb{N}}$ is a sequence of states and $(M_{\phi_i})_{\phi_i}$ is a corresponding measurement such that~\eqref{eq:att-limit-bdc} and~\eqref{eq:p_nu-dens} hold for a single-mode BDC. Then it follows immediately that
\begin{equation}
    p(\phi) = \lim_{\nu \to \infty} p_\nu (\phi),
\end{equation}
for all $\phi= (\phi_1, \ldots, \phi_m)$, where
\begin{equation}
    p_\nu (\phi) \coloneqq \operatorname{Tr}\!\left[ \bigotimes_{i=1}^m M_{\phi_i} \mathcal{D}_p^{(m)}\left(\bigotimes_{i=1}^m \rho^i_{\nu}\right)\right  ].
\end{equation}

Let us finally note that, generalizing the statements in Section~\ref{sec:deph-plus-loss}, all of these results for multimode BDCs hold even when a set of multimode BDCs are affected by a common multimode pure-loss channel.

\section{Generalization to multiparameter channel estimation}

\label{app:multiparam}

Our channel estimation results generalize to the setting of multiparameter channel estimation. This follows by considering a cost function of multiple parameters, which results in a risk function. Then optimizing over all adaptive strategies leads to the same risk function evaluated on the underlying probability density of the BDC. The reasoning here is essentially the same as that used for all other conclusions in our paper: all adaptive strategies for estimation are particular estimation strategies on the underlying probability densities, and one can obtain samples from these underlying densities by employing either of the schemes in Sections~\ref{sec:phot-num-sup-scheme} or \ref{sec:coh-state-scheme}.

\section{Photon-number-superposition method in the presence of loss}

\label{app:lossy-fejer}

In this appendix, we calculate the probability distribution that results when
using the photon-number-superposition method if there is photon loss in
addition to the action of a phase rotation. By modeling the pure-loss channel
as a beamsplitter interaction between the input and an environment mode in the
vacuum state \cite{serafini2017quantum}, we find that the input state transforms as follows:%
\begin{equation}
|+_{d}\rangle\rightarrow\frac{1}{\sqrt{d}}\sum_{n=0}^{d-1}\sum_{\ell=0}%
^{n}\sqrt{\eta^{n-\ell}\left(  1-\eta\right)  ^{\ell}\binom{n}{\ell}}%
|n-\ell\rangle|\ell\rangle .
\end{equation}
After the action of the phase rotation $e^{-i\hat{n}\phi} \otimes I$, the state becomes
\begin{multline}
|\psi(\phi,\eta)\rangle \coloneqq \\
\frac{1}{\sqrt{d}}\sum_{n=0}^{d-1}\sum_{\ell=0}
^{n}e^{-i\left(  n-\ell\right)  \phi}\sqrt{\eta^{n-\ell}\left(  1-\eta\right)
^{\ell}\binom{n}{\ell}}|n-\ell\rangle|\ell\rangle.
\end{multline}
The probability for obtaining outcome $k$ after measuring in the Fourier basis
$\left\{  |u_{k}\rangle\right\}  _{k=0}^{d-1}$ (defined in \eqref{eq:fourier-basis-states}) is then calculated according to
the Born rule as follows:
\begin{widetext}
\begin{align}
 \left\Vert \left(  \langle u_{k}|\otimes I\right)  |\psi(\phi,\eta
)\rangle\right\Vert _{2}^{2} 
& =\left\Vert
\begin{array}
[c]{c}%
\left(  \frac{1}{\sqrt{d}}\sum_{m=0}^{d-1}e^{2\pi imk/d}\langle m|\otimes
I\right)  \times\\
\left(  \frac{1}{\sqrt{d}}\sum_{n=0}^{d-1}\sum_{\ell=0}^{n}e^{-i\left(
n-\ell\right)  \phi}\sqrt{\eta^{n-\ell}\left(  1-\eta\right)  ^{\ell}\binom
{n}{\ell}}|n-\ell\rangle|\ell\rangle\right)
\end{array}
\right\Vert _{2}^{2}\\
& =\frac{1}{d^{2}}\left\Vert \sum_{n=0}^{d-1}\sum_{\ell=0}^{n}e^{2\pi i\left(
n-\ell\right)  k/d}e^{-i\left(  n-\ell\right)  \phi}\sqrt{\eta^{n-\ell}\left(
1-\eta\right)  ^{\ell}\binom{n}{\ell}}|\ell\rangle\right\Vert _{2}^{2}\\
& =\frac{1}{d^{2}}\left\Vert \sum_{n=0}^{d-1}\sum_{\ell=0}^{n}e^{i\left(
n-\ell\right)  \left(  2\pi k/d-\phi\right)  }\sqrt{\eta^{n-\ell}\left(
1-\eta\right)  ^{\ell}\binom{n}{\ell}}|\ell\rangle\right\Vert _{2}^{2}\\
& =\frac{1}{d^{2}}\left(  \sum_{m=0}^{d-1}\sum_{\ell^{\prime}=0}%
^{m}e^{-i\left(  m-\ell^{\prime}\right)  \left(  2\pi k/d-\phi\right)  }%
\sqrt{\eta^{m-\ell^{\prime}}\left(  1-\eta\right)  ^{\ell^{\prime}}\binom
{m}{\ell^{\prime}}}\langle\ell^{\prime}|\right)  \nonumber\\
& \qquad\times\left(  \sum_{n=0}^{d-1}\sum_{\ell=0}^{n}e^{i\left(
n-\ell\right)  \left(  2\pi k/d-\phi\right)  }\sqrt{\eta^{n-\ell}\left(
1-\eta\right)  ^{\ell}\binom{n}{\ell}}|\ell\rangle\right)  \\
& =\frac{1}{d^{2}}\sum_{m,n=0}^{d-1}\sum_{\ell=0}^{\min\left\{  n,m\right\}
}e^{-i\left(  m-n\right)  \left(  2\pi k/d-\phi\right)  }\eta^{\left(
m+n\right)  /2-\ell}\left(  1-\eta\right)  ^{\ell}\sqrt{\binom{m}{\ell}%
\binom{n}{\ell}}.
\label{eq:lossy-fejer}
\end{align}
\end{widetext}
We recover the probability distribution in \eqref{eq:phot-num-fourier-dist} in the limit as
$\eta\rightarrow1$. Numerical experiments indicate that the distribution in \eqref{eq:lossy-fejer} is highly peaked around $\phi$ for fixed transmissivity $\eta \in (0,1)$ as $d$ becomes larger.

\end{document}